\documentclass[sigconf]{acmart}
\usepackage{cleveref}

\usepackage{bm}
\usepackage{bbm} 
\usepackage{caption}
\usepackage{subcaption}
\usepackage{booktabs,amsfonts,dcolumn}
\usepackage{graphicx}
\usepackage{subcaption}
\usepackage{mathtools}
\usepackage{wasysym}
\usepackage{hhline}

\usepackage{multirow}
\usepackage{booktabs,siunitx}

\usepackage{stmaryrd}
\usepackage{trimclip}
\usepackage{pifont}
\usepackage{enumitem}
\usepackage{mleftright}
\usepackage{scalerel,stackengine}
\usepackage{bbold}
\usepackage{bookmark}
\usepackage{algorithm}
\usepackage{algorithmic}

\renewcommand\footnotetextcopyrightpermission[1]{} 
\setcopyright{acmlicensed}
\copyrightyear{2018}
\acmYear{2018}
\acmDOI{XXXXXXX.XXXXXXX}
\acmConference[Conference acronym 'XX]{Make sure to enter the correct
  conference title from your rights confirmation email}{June 03--05,
  2018}{Woodstock, NY}
\acmISBN{978-1-4503-XXXX-X/18/06}

\begin{document}
\title{Directional Posterior Feedback Fusion for Query Reformulation in Dense Retrieval}

\author{Yiteng Tu}
\affiliation{%
    \institution{Tsinghua University}
    \city{Beijing}
    \country{China}
}
\email{tyt24@mails.tsinghua.edu.cn}

\author{Qingyi Pan}
\affiliation{%
    \institution{Tsinghua University}
    \city{Beijing}
    \country{China}
}

\author{Weihang Su}
\affiliation{%
    \institution{Tsinghua University}
    \city{Beijing}
    \country{China}
}

\author{Jiaxin Mao}
\affiliation{%
    \institution{Renmin University of China}
    \city{Beijing}
    \country{China}
}

\author{Yiqun Liu}
\affiliation{%
    \institution{Tsinghua University}
    \city{Beijing}
    \country{China}
}

\author{Qingyao Ai}
\authornote{Corresponding Author}
\affiliation{%
    \institution{Tsinghua University}
    \city{Beijing}
    \country{China}
}

\begin{abstract}
Dense retrieval has become a fundamental paradigm for modern information retrieval, supporting efficient semantic search across a wide range of applications.
Its effectiveness, however, relies critically on query embeddings that adequately capture users’ information needs, which can be difficult for short, ambiguous, or underspecified queries. 
Feedback-based query reformulation approaches like pseudo-relevance feedback (PRF) alleviate this by incorporating additional evidence from retrieved or generated content. 
However, existing embedding-space approaches commonly aggregate or interpolate feedback vectors, implicitly treating them as complete target representations. 
Such pointwise fusion can unnecessarily suppress query corrections along directions not expressed by individual feedback items, even when these directions are supported by complementary evidence. 
To address this limitation, we propose Directional Posterior Feedback Fusion (DIFF), a query reformulation framework that models feedback as directional evidence rather than target points. 
DIFF represents feedback as confidence-weighted directional evidence in the tangent space of the original query and jointly infers a controlled query update through prior-regularized posterior fusion, without retraining the underlying dense retriever.
Experiments across multiple dense retrievers and different feedback sources show that DIFF achieves the best overall retrieval performance, demonstrating the effectiveness of explicitly modeling how feedback should constrain query updates.
Code is available at:~\url{https://anonymous.4open.science/r/DIFF/}.
\end{abstract}

\begin{CCSXML}
<ccs2012>
   <concept>
       <concept_id>10002951.10003317.10003325.10003330</concept_id>
       <concept_desc>Information systems~Query reformulation</concept_desc>
       <concept_significance>500</concept_significance>
       </concept>
 </ccs2012>
\end{CCSXML}
\ccsdesc[500]{Information systems~Query reformulation}
\keywords{Query Reformulation, Dense Retrieval, Pseudo-relevance Feedback, Generative Relevance Feedback}

\maketitle

\section{Introduction}
Dense retrieval has become a fundamental paradigm for modern information retrieval~\cite{xiong2020approximate,izacard2021unsupervised,karpukhin2020dense}. 
It represents queries and documents as dense vectors and retrieves relevant documents according to their similarity in a shared semantic embedding space, thus bridging the gap between users' information needs and vast document collections~\cite{sun2026query,tu2026generalized}. 
However, its effectiveness depends critically on whether the query embedding adequately captures the user’s underlying information need~\cite{yu2021improving,lin2021contextualized}. 
User queries can be short, ambiguous, or underspecified, and only partially describe the relevant information~\cite{tu2026generalized,gao2023precise}. 
Query reformulation therefore offers a natural way to bridge this gap by enriching the initial query representation with additional evidence.

A common strategy is to obtain such evidence from feedback. 
Classical pseudo-relevance feedback (PRF)~\cite{abdul2004umass,rocchio1971relevance} assumes that top-ranked documents from an initial retrieval contain useful information and uses them to refine the query representation. 
More recently, large language models (LLMs)~\cite{yang2025qwen3,liu2024deepseek,achiam2023gpt,touvron2023llama} have further broadened the possible sources of feedback by generating hypothetical documents, pseudo-answers, query expansions, or other auxiliary content~\cite{gao2023precise,wang2023query2doc,mackie2023generative}.
Despite their different origins, these approaches share a common question: \textit{Given a set of feedback representations, how should they be used to modify the original query?}
In this work, we separate this question from how the feedback is obtained and focus on converting feedback vectors into a controlled and robust query update.

Existing vector-based feedback methods commonly construct the reformulated query by averaging feedback embeddings, linearly interpolating them with the original query, or otherwise moving the query toward an aggregate feedback representation~\cite{li2025llm,gao2023precise,yu2021improving}.
Although intuitive, such pointwise fusion implicitly treats each feedback vector as evidence for a complete target representation (as shown in the left part of Figure~\ref{fig:main}; more detailed discussion in \S\ref{subsec:old_obj}).
This assumption can be unnecessarily restrictive. 
A feedback item may support only one aspect of the information need rather than specify what the entire reformulated query should be. 
Consequently, dimensions or directions not expressed by that feedback should not automatically be interpreted as evidence against changing the query along those directions.

To address this limitation, we shift from pointwise fitting to directional constraints. 
Instead of asking the reformulated query to approach each feedback vector as a whole, we let each feedback constrain only the query movement along the direction it supports, as shown in the right part of Figure~\ref{fig:main}. 
Specifically, we operate in the tangent space of the unit sphere at the original query representation, where feedback can be expressed as local directional evidence and multiple pieces of evidence can be jointly incorporated through a constrained optimization objective. 
This formulation allows complementary feedback to govern different directions of the query update, while a prior toward the original query keeps the reformulation conservative when feedback evidence is insufficient.

\begin{figure}[t]
    \centering
    \includegraphics[width=0.5\textwidth]{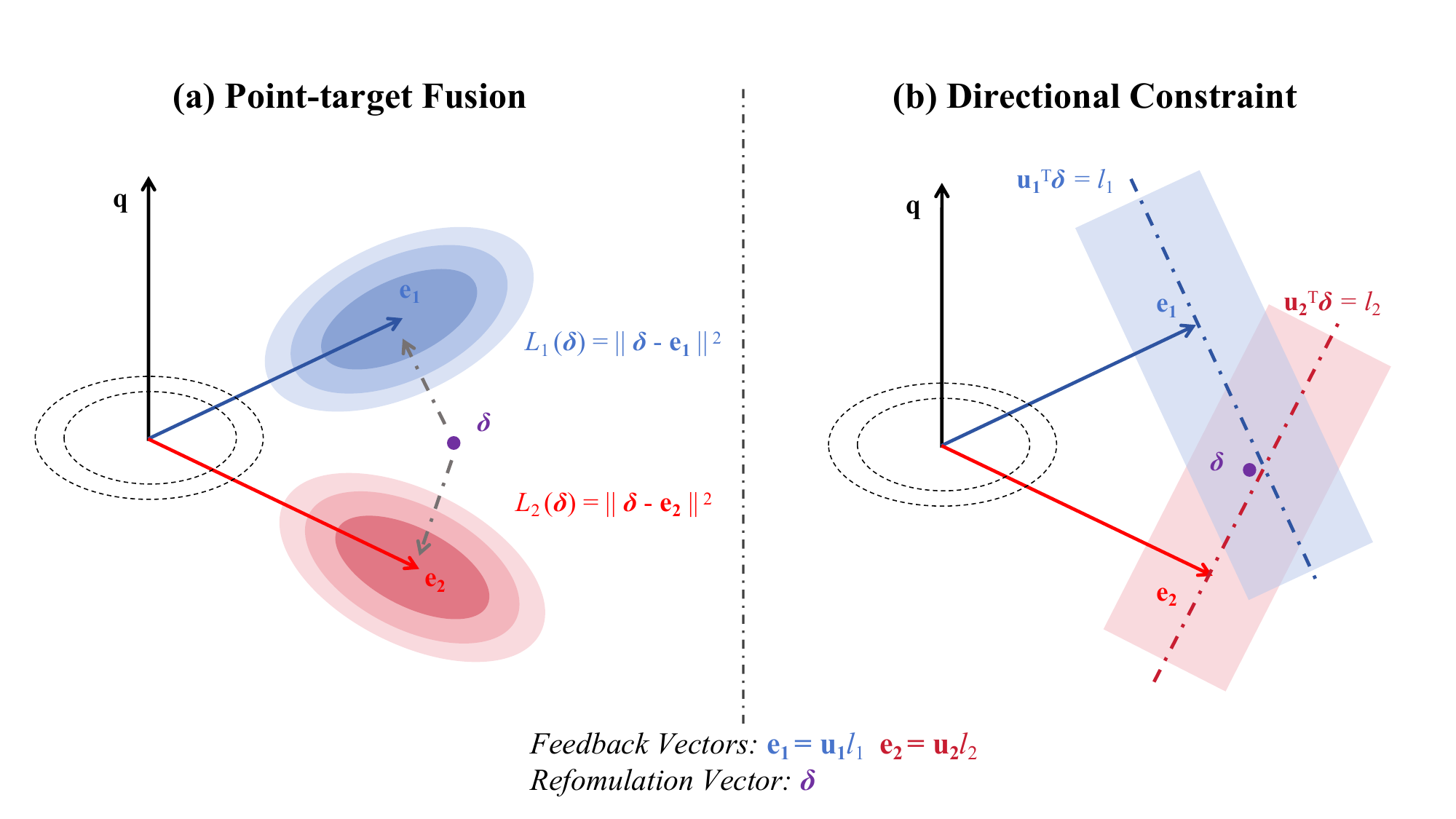}
    \vspace{-6mm}
    \caption{A comparison between pointwise fitting and directional constraints.}
    \label{fig:main}
    \vspace{-4mm}
\end{figure}

Based on this formulation, we propose \textbf{DI}rectional Posterior \textbf{F}eedback \textbf{F}usion (DIFF), a query reformulation framework for dense retrieval. 
DIFF first converts feedback vectors into directional evidence in the tangent space of the original query, and then estimates their contributions using lightweight local- and set-level confidence control. 
The resulting directional observations are jointly fused through a prior-regularized posterior objective with an efficient closed-form solution, after which the inferred correction is mapped back to the unit sphere. 
DIFF operates directly on fixed pretrained embeddings without updating the retriever or training an additional reformulation model, making it applicable to feedback from different sources.
Experiments across multiple retrieval benchmarks and dense retrievers show that DIFF achieves the best overall retrieval performance in diverse retrieval settings.

In conclusion, our contributions are threefold:
(1) We introduce a directional perspective on vector feedback for dense query reformulation. 
Rather than treating feedback embeddings as complete target points, we formulate each feedback item as a local directional constraint, allowing complementary evidence to contribute along different directions of the query update.
(2) We propose DIFF, a posterior fusion framework that jointly incorporates directional evidence with local feedback quality and set-level confidence, while explicitly regularizing the update toward the original query.
(3) We develop a training-free and feedback-source-agnostic reformulation method that operates directly on pretrained embeddings and admits an efficient closed-form solution, enabling its application to both retrieval-based and generated feedback.

\section{Preliminary}
\subsection{Dense Retrieval}
Given a query $x_q$ and a document $x_d$, a dense retriever $E(\cdot)$ encodes them into L2-normalizated $D$-dimensional representations:
\begin{equation}
    \mathbf q = E(x_q), \qquad \mathbf d = E(x_d),
\end{equation}
where $\mathbf q,\mathbf d\in\mathbb R^D$. 
Documents are then ranked according to the similarity between their representations and the query representation. With L2-normalization, $\|\mathbf q\|_2=\|\mathbf d\|_2=1$, so retrieval can be performed via their inner product, $s(\mathbf q,\mathbf d)=\mathbf q^\top\mathbf d$, which equals cosine similarity.
Under this normalization, embeddings lie on the unit sphere $\mathbb S^{D-1}$, and changing the query representation primarily amounts to changing its direction. 
This geometric view motivates us to model query reformulation locally around the original query rather than directly manipulating its vector magnitude.

\subsection{Feedback-based Query Reformulation}
Given an original query representation $\mathbf q$ and $K$ feedback representations generated by the same retriever $E(\cdot)$
\begin{equation}
    \mathcal F=\{\mathbf f_i\}_{i=1}^{K}, \qquad \mathbf f_i\in\mathbb R^D,\quad \|\mathbf f_i\|_2=1,
\end{equation}
A feedback-based query reformulator constructs a new normalized representation $\mathbf q'$, which is subsequently used with the same dense retriever to retrieve documents.

The feedback can originate from different sources. 
For example, the classical pseudo-relevance feedback (PRF) uses documents returned by an initial retrieval, whereas generative approaches can construct auxiliary texts such as hypothetical documents or pseudo answers and encode them as feedback. 
Our focus is orthogonal to how these feedback items are obtained: given $\mathbf q$ and $\mathcal F$, we study how to utilize them to produce a controlled query update regardless of how feedback is obtained.

A common strategy is to combine the original query with individual or aggregated feedback representations through averaging or linear interpolation. 
Abstractly, these approaches move the query toward one or more feedback vectors in the embedding space. 
As discussed above, this pointwise view implicitly treats a feedback representation as specifying a complete target for the reformulated query.
We next formalize why this can be undesirable when different feedback items provide complementary evidence.

\subsection{From Point Targets to Directional Evidence} \label{subsec:old_obj}
Since the query lies on the unit sphere, we consider its local tangent space $T_{\mathbf q}=\left\{\boldsymbol\delta\in\mathbb R^D: \mathbf q^\top\boldsymbol\delta=0 \right\}$, where $\boldsymbol\delta$ represents a local directional correction to the original query. 
Components parallel to $\mathbf q$ do not introduce a new local direction on the sphere, whereas components in $T_{\mathbf q}$ describe how the query direction can change.
Consider a feedback item whose proposed tangent-space correction can be represented as $\mathbf e_i=\ell_i\mathbf u_i$,
where $\mathbf u_i\in T_{\mathbf q}, \|\mathbf u_i\|_2=1$ is a unit direction and $\ell_i\geq0$ denotes the magnitude of evidence along that direction. 
A pointwise formulation would encourage the overall correction $\boldsymbol\delta$ to approach the entire feedback correction $\mathbf e_i$, for example by minimizing $\left\|\boldsymbol\delta-\ell_i\mathbf u_i\right\|^2$.
Because $\|\mathbf u_i\|=1$, this objective can be decomposed as (detailed derivation in Appendix~\ref{app:old_obj}): 
\begin{equation}\label{eq:decomp}
\left\|\boldsymbol\delta-\ell_i\mathbf u_i\right\|^2
=
\underbrace{
\left(\mathbf u_i^\top\boldsymbol\delta-\ell_i\right)^2
}_{\text{error along the supported direction}}
+
\underbrace{
\left\|
(I-\mathbf u_i\mathbf u_i^\top)\boldsymbol\delta
\right\|^2
}_{\text{penalty on all other directions}}.
\end{equation}

The first term has the desired effect: it encourages the query correction to match the evidence along the direction supported by feedback $i$. 
The second term, however, additionally penalizes every component of $\boldsymbol\delta$ orthogonal to $\mathbf u_i$. 
Such components are not necessarily undesirable. 
They may correspond to complementary directions supported by other feedback items. 
Thus, treating a feedback item as a complete target imposes constraints not only on what the feedback supports, but also on directions about which it provides no evidence.

To better explain this phenomenon, we compare the two ways of fusing $K$ feedback items under the \emph{same} original-query prior $\lambda$ and confidence weights $\tau_i\ge 0$, so that the only difference between them is what each feedback item is allowed to constrain:
\begin{equation}\label{eq:two_obj}
\begin{aligned}
\mathcal L_{\mathrm{pt}}(\boldsymbol\delta)
&=\frac{\lambda}{2}\|\boldsymbol\delta\|^2
+\frac12\sum_{i=1}^{K}\tau_i\left\|\boldsymbol\delta-\ell_i\mathbf u_i\right\|^2 ,\\
\mathcal L_{\mathrm{dir}}(\boldsymbol\delta)
&=\frac{\lambda}{2}\|\boldsymbol\delta\|^2
+\frac12\sum_{i=1}^{K}\tau_i\big(\mathbf u_i^\top\boldsymbol\delta-\ell_i\big)^2 .
\end{aligned}
\end{equation}
Here $\mathcal L_{\mathrm{pt}}$ fits every feedback target as a whole, whereas $\mathcal L_{\mathrm{dir}}$ keeps only the projection constraint $\mathbf u_i^\top\boldsymbol\delta\approx\ell_i$ of each item.
Let $\boldsymbol\delta^*_{\mathrm{pt}}$ and $\boldsymbol\delta^*_{\mathrm{dir}}$ denote their minimizers over $T_{\mathbf q}$, and let $\mathbf b=\sum_{i=1}^{K}\tau_i\ell_i\mathbf u_i$ be the confidence-weighted aggregate evidence, we have:
\begin{theorem}[Pointwise fusion interferes with complementary evidence]
\label{thm:interference}
Let $\lambda>0$, $\tau_i\ge 0$, $\ell_i>0$,
$\mathbf u_i\in T_{\mathbf q}$ with $\|\mathbf u_i\|=1$,
and $\mathbf b\neq\mathbf 0$.
Then, the minimizers satisfy: 
\begin{equation}\label{eq:thm_norm}
\|\boldsymbol\delta^*_{\mathrm{dir}}\|
\ge \|\boldsymbol\delta^*_{\mathrm{pt}}\|,
\end{equation}
with equality if and only if all supported directions are collinear.
Moreover, if the supported directions are mutually orthogonal,
then for every $j$ with $\tau_j>0$,
\begin{equation}\label{eq:thm_interf}
\mathbf u_j^\top\boldsymbol\delta^*_{\mathrm{dir}}
=\frac{\tau_j}{\lambda+\tau_j}\,\ell_j ,
\qquad
\mathbf u_j^\top\boldsymbol\delta^*_{\mathrm{pt}}
=\frac{\tau_j}{\lambda+\tau_j+\sum_{i\neq j}\tau_i}\,\ell_j .
\end{equation}
\end{theorem}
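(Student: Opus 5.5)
The plan is to compute both minimizers in closed form and then compare them spectrally. Write $T=\sum_{i}\tau_i$ and $A=\sum_i\tau_i\mathbf u_i\mathbf u_i^\top$.

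\textbf{Closed forms.} Both objectives in \eqref{eq:two_obj} are strictly convex quadratics because $\lambda>0$. Setting the gradient of $\mathcal L_{\mathrm{pt}}$ to zero gives $\lambda\boldsymbol\delta+\sum_i\tau_i(\boldsymbol\delta-\ell_i\mathbf u_i)=\mathbf 0$, so
\[
\boldsymbol\delta^*_{\mathrm{pt}}=\frac{\mathbf b}{\lambda+T}.
\]
For $\mathcal L_{\mathrm{dir}}$ the stationarity condition is $(\lambda I+A)\boldsymbol\delta=\mathbf b$, so
\[
\boldsymbol\delta^*_{\mathrm{dir}}=(\lambda I+A)^{-1}\mathbf b.
\]
Both unconstrained minimizers already lie in $T_{\mathbf q}$. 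Indeed, $\mathbf b$ lies in $\operatorname{span}\{\mathbf u_i\}\subseteq T_{\mathbf q}$, and $\lambda I+A$ maps $T_{\mathbf q}$ into itself and annihilates nothing, so its inverse preserves $T_{\mathbf q}$. Hence minimizing over $T_{\mathbf q}$ gives the same points as minimizing over $\mathbb R^D$.

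\textbf{Norm inequality.} The matrix $A$ is positive semidefinite. For any unit vector $\mathbf v$ we have
\[
\mathbf v^\top A\mathbf v=\sum_i\tau_i(\mathbf u_i^\top\mathbf v)^2\le T,
\]
so every eigenvalue $\mu_k$ of $A$ lies in $[0,T]$. Expand $\mathbf b=\sum_k\beta_k\mathbf w_k$ in an orthonormal eigenbasis $\{\mathbf w_k\}$ of $A$. Then
\[
\|\boldsymbol\delta^*_{\mathrm{dir}}\|^2=\sum_k\frac{\beta_k^2}{(\lambda+\mu_k)^2}\ge\sum_k\frac{\beta_k^2}{(\lambda+T)^2}=\|\boldsymbol\delta^*_{\mathrm{pt}}\|^2,
\]
which proves \eqref{eq:thm_norm}.

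\textbf{Equality case.} I expect this step to need the most care. Since $\mathbf b\neq\mathbf 0$, equality holds iff $\mathbf b$ lies entirely in the eigenspace of $A$ for eigenvalue $T$. That eigenspace is nontrivial iff there is a unit $\mathbf v$ with $\mathbf v^\top A\mathbf v=T$. By the Cauchy--Schwarz step above, this happens iff $|\mathbf u_i^\top\mathbf v|=1$ for every $i$ with $\tau_i>0$, that is, iff all supported directions (those with $\tau_i>0$) are collinear with $\mathbf v$.

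Conversely, suppose the supported directions are collinear, so $\mathbf u_i=\pm\mathbf u$ for a fixed unit $\mathbf u$. Then $A=T\mathbf u\mathbf u^\top$ and $\mathbf b$ is a nonzero multiple of $\mathbf u$. So $\mathbf b$ lies in the $T$-eigenspace and equality holds.

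In the forward direction, if $\mathbf b$ lies in the $T$-eigenspace, that eigenspace is nontrivial, so collinearity follows. The statement's phrase ``all supported directions'' is read as ranging over items with $\tau_i>0$; items with $\tau_i=0$ affect neither objective.

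\textbf{Orthogonal case.} If the $\mathbf u_i$ are mutually orthogonal, each $\mathbf u_j$ is an eigenvector of $A$ with eigenvalue $\tau_j$. Then $(\lambda I+A)^{-1}$ is symmetric and $(\lambda I+A)^{-1}\mathbf u_j=\mathbf u_j/(\lambda+\tau_j)$, and $\mathbf u_j^\top\mathbf b=\tau_j\ell_j$. Therefore
\[
\mathbf u_j^\top\boldsymbol\delta^*_{\mathrm{dir}}=\frac{\mathbf u_j^\top\mathbf b}{\lambda+\tau_j}=\frac{\tau_j\ell_j}{\lambda+\tau_j}.
\]
Projecting $\boldsymbol\delta^*_{\mathrm{pt}}=\mathbf b/(\lambda+T)$ onto $\mathbf u_j$ gives
\[
\mathbf u_j^\top\boldsymbol\delta^*_{\mathrm{pt}}=\frac{\tau_j\ell_j}{\lambda+\tau_j+\sum_{i\neq j}\tau_i},
\]
which is \eqref{eq:thm_interf}.
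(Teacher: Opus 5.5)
Your proof is correct and follows essentially the paper's argument. It uses the same closed forms $(\lambda I+A)^{-1}\mathbf b$ and $\mathbf b/(\lambda+\sum_i\tau_i)$, the same eigen-expansion comparison, and the same eigenvector computation in the orthogonal case; your Rayleigh-quotient characterization of the top eigenspace is the paper's condition $P\mathbf b=\mathbf 0$ with $P=\sum_i\tau_i(I-\mathbf u_i\mathbf u_i^\top)$, just written without naming $P$. One thing to tighten: in the last part you assume all $\mathbf u_i$ are mutually orthogonal, whereas the theorem assumes this only for the supported ones; your computation goes through unchanged because items with $\tau_i=0$ drop out of both $A$ and $\mathbf b$.
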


Theorem~\ref{thm:interference} (proved in Appendix~\ref{app:proof}) shows that the point-target and directional formulations agree only when the feedback set carries no complementary information. 
As soon as feedback items support different directions, pointwise fusion shrinks each item’s evidence by the total confidence $\sum_{i\neq j}\tau_i$ of \emph{all other} items—with $K$ equally confident complementary items, the recovered fraction $\tau/(\lambda+K\tau)$ vanishes as $K$ grows—whereas directional fusion recovers each direction at a rate $\tau_j/(\lambda+\tau_j)$ determined solely by its own confidence against the prior. 
The gap stems from the \emph{scope} of each constraint.
By Eq.~\eqref{eq:decomp}, a pointwise term penalizes $\boldsymbol\delta$ over the entire tangent space rather than only along $\mathbf u_i$. These penalties add $\tfrac12\boldsymbol\delta^\top P\boldsymbol\delta$ with $P=\sum_i\tau_i(I-\mathbf u_i\mathbf u_i^\top)$ to $\mathcal L_{\mathrm{dir}}$, so that every direction is penalized by the total confidence of the items that do \emph{not} support it.
Consequently, $\boldsymbol\delta^*_{\mathrm{pt}}=\mathbf b/(\lambda+\sum_i\tau_i)$ is an isotropically shrunk copy of the aggregate evidence. Its direction is fixed to that of $\mathbf b$ and its scale is set by the \emph{total} confidence. $\boldsymbol\delta^*_{\mathrm{dir}}=(\lambda I+\sum_i\tau_i\mathbf u_i\mathbf u_i^\top)^{-1}\mathbf b$ shrinks each supported direction only against its own confidence and the prior.
Directional fusion therefore covers a broader range of the complementary evidence, and indeed $\|\boldsymbol\delta^*_{\mathrm{dir}}\|\ge\|\boldsymbol\delta^*_{\mathrm{pt}}\|$, with equality only in the collinear case as shown in Corollary~\ref{cor:norm}.
Motivated by this analysis, we propose \emph{Directional Posterior Feedback Fusion (DIFF)}, which represents each feedback item as a selective directional constraint $\mathbf u_i^\top\boldsymbol\delta\approx\ell_i$ rather than a complete target point. 
It further regularizes the correction toward the original query and modulates the directional evidence according to feedback reliability, thereby preserving complementary evidence while remaining conservative under insufficient or unreliable feedback.

\section{Methodology}
\subsection{Overview}
Given a normalized query embedding $\mathbf q\in\mathbb R^D$ and $K$ normalized feedback embeddings $\mathcal F=\{\mathbf f_i\}_{i=1}^{K}$, our goal is to construct a reformulated query $\mathbf q'$ that incorporates useful information from the feedback while avoiding excessive deviation from the original query. 
As discussed above, we do not treat feedback embeddings as complete target representations.
Instead, we convert them into directional observations in the tangent space $T_{\mathbf q}$ and jointly infer a query correction $\boldsymbol\delta$ (as shown in Figure~\ref{fig:main}).
Specifically, each feedback item is represented by a triplet $(\mathbf u_i,\ell_i,\tau_i)$, where $\mathbf u_i\in T_{\mathbf q}$ denotes the direction supported by the feedback, $\ell_i$ specifies the target projection along this direction, and $\tau_i\geq0$ represents the confidence of this directional observation. Each feedback item therefore provides a noisy constraint $\mathbf u_i^\top\boldsymbol\delta \approx \ell_i$. We jointly fuse these observations by the objective:
\begin{equation}\label{eq:obj}
\boldsymbol\delta^*=\arg\min_{\boldsymbol\delta\in T_{\mathbf q}}\frac{\lambda}{2}\|\boldsymbol\delta\|^2
+\frac{1}{2}\sum_{i=1}^{K}\tau_i\left(\mathbf u_i^\top\boldsymbol\delta-\ell_i\right)^2,
\end{equation}
where $\lambda>0$ controls the strength of the original-query prior. 
The first term favors a conservative correction around the original query, whereas the second term fits the directional evidence according to its estimated reliability. 
Theorem~\ref{thm:interference} in \S\ref{subsec:old_obj} makes this precise: summing the above decomposition over all feedback items shows that pointwise fusion equals directional fusion plus the penalty $\tfrac12\boldsymbol\delta^\top P\boldsymbol\delta$ with $P=\sum_i\tau_i(I-\mathbf u_i\mathbf u_i^\top)$, which penalizes every direction in proportion to the total confidence of the feedback items that do not support it and is therefore what suppresses complementary corrections.
Equivalently, it can be interpreted as maximum a posteriori (MAP) inference under a zero-centered Gaussian prior over the query correction and noisy directional observations.
Our proposed method, Directional Posterior Feedback Fusion (DIFF), instantiates this formulation in three steps: constructing directional evidence $(\mathbf u_i,\ell_i)$ (\S\ref{subsec:direction}), estimating its precision $\tau_i$ through local- and set-level confidence (\S\ref{subsec:conf}), and efficiently solving Eq.~\eqref{eq:obj} while mapping the inferred correction back to the unit sphere (\S\ref{subsec:update}). Appendix~\ref{app:algo} summarizes the overall algorithm.

\subsection{Directional Evidence Construction} \label{subsec:direction}
We first extract the information that each feedback embedding introduces relative to the original query.
For feedback $\mathbf f_i$, its similarity to the query is $s_i=\mathbf q^\top\mathbf f_i$.
We then remove its component parallel to the query:
\begin{equation}
    \mathbf r_i=\mathbf f_i-s_i\mathbf q, 
\end{equation}
since $\mathbf q^\top\mathbf r_i=\mathbf q^\top\mathbf f_i-s_i\mathbf q^\top\mathbf q=0$, $\mathbf r_i \in T_{\mathbf q}$ and captures the component of feedback $i$ that can induce a local change in query direction.

We then decompose this tangent component into its direction and magnitude:
\begin{equation}
    \mathbf u_i=\frac{\mathbf r_i}{\|\mathbf r_i\|}, \qquad \ell_i=\|\mathbf r_i\|,
\end{equation}

Here, $\mathbf u_i$ determines where the query should be corrected, while $\ell_i$ determines the target projection supported along this direction. 

Since both $\mathbf q$ and $\mathbf f_i$ are normalized, we have $\|\mathbf r_i\| = \sqrt{1-(\mathbf q^\top\mathbf f_i)^2} = \sin\angle(\mathbf q,\mathbf f_i)$. 
Thus, $\ell_i$ measures the orthogonal novelty introduced by the feedback relative to the original query. 
Note that similarity $s_i$ and orthogonal evidence $(\mathbf u_i,\ell_i)$ play different roles in DIFF. 
The latter specifies the proposed query correction, whereas $s_i$ indicates how closely the feedback agrees with the original query and is subsequently used to estimate its reliability. 
This separation allows a feedback item to introduce new semantic directions without equating novelty itself with reliability.

\subsection{Hierarchical Feedback Confidence} \label{subsec:conf}
The directional representation above determines what correction each feedback item proposes, but not how much influence it should have on the reformulated query. 
Actually, feedback is inherently noisy and imperfect~\cite{tu2025robust,tu2026generalized}: some feedback items may be less aligned with the original information need, while even individually plausible items may collectively form an ambiguous or inconsistent correction signal. 
We therefore consider feedback reliability from two complementary perspectives. 
At the local item-level, we assess the relative quality of each feedback item with respect to the original query, so that stronger feedback contributes more to posterior inference. 
At the set-level, we assess whether the feedback items collectively provide a sufficiently coherent and reliable signal to justify modifying the query. 
The two levels respectively control the relative contribution of individual observations and the overall strength of the feedback set.
Accordingly, we factor the precision or confidence of the $i$-th directional observation as
\begin{equation}
    \tau_i = g\,a_i,
\end{equation}
where $a_i$ denotes its local quality and $g$ denotes the confidence of the entire feedback set. 

\subsubsection{Local Quality} \hfill\newline
At each feedback's local level, we would like to identify which feedback items are more compatible with the original query. 
However, directly applying a fixed threshold to the query-feedback similarity $s_i$ can be brittle, since similarity distributions may vary substantially across queries and embedding models. 
Instead, we evaluate each feedback item relative to the feedback set of the same query using robust MAD normalization $z_i=\frac{s_i-m}{1.4826\,M}$, where $m=\operatorname{median}_{i}(s_i)$ and $M=\operatorname{median}_{i}|s_i-m|$ are the median and median absolute deviation (MAD) of query-feedback similarities, respectively.
The robust normalization also helps reduce sensitivity to outlying similarity values and avoids relying on a global similarity scale.
Then, the local quality is calculated as $a_i=\sigma\left(z_i\right)$, where $\sigma(\cdot)$ is the sigmoid function. 
Thus, local quality $a_i$ captures a query-local notion of quality, feedback with relatively stronger query affinity receives greater influence, while relatively weak feedback is downweighted.

\subsubsection{Set-level Confidence} \hfill\newline
Local quality alone is insufficient to determine whether the overall feedback set as a whole supports a meaningful query update. 
In particular, a set may contain relatively strong individual items while their proposed corrections point in different directions. 
Therefore, we characterize set-level confidence using two factors: whether the directional evidence forms a coherent common signal, and whether the feedback set has sufficient average local quality.
We compute the mean evidence and summarize its common signal and dispersion:
\begin{equation}
    \boldsymbol\mu=\frac{1}{K}\sum_{i=1}^{K}\ell_i\mathbf u_i, \
    S=\|\boldsymbol\mu\|^2, \ V=\frac{1}{K}\sum_{i=1}^{K}\|\ell_i\mathbf u_i -\boldsymbol\mu\|^2.
\end{equation}
Here, $S$ measures the strength of the shared correction signal, whereas $V$ measures disagreement among individual feedback directions. 
We define the geometric confidence as $g_{\mathrm{geo}}=\frac{S}{S+V/K}$.
Thus, a coherent feedback set produces a strong common direction relative to its dispersion and therefore receives a high $g_{\mathrm{geo}}$, while conflicting or widely dispersed evidence yields a lower value.

However, geometric agreement alone is not sufficient.
A set of weak feedback items can still be mutually consistent. 
We therefore additionally compute its average local quality $\bar a=\frac{1}{K}\sum_{i=1}^{K}a_i$, and the overall set-level confidence is calculated as $g=g_{\mathrm{geo}}\bar a$.
$g$ is high only when the feedback exhibits both a coherent correction signal and sufficient query affinity. 
Together with $a_i$, it provides a simple hierarchical mechanism for controlling how strongly the directional evidence enters posterior inference.

\subsection{Posterior Inference and Spherical Update} \label{subsec:update}
After computing the features for each feedback, we can solve the directional posterior objective in Eq.~\eqref{eq:obj}. Let:
\begin{equation}
    U=\begin{bmatrix}\mathbf u_1^\top\\ \vdots\\ \mathbf u_K^\top \end{bmatrix} \in\mathbb R^{K\times D}, \
    W=\operatorname{diag} \left( \sqrt{\tau_1},\ldots,\sqrt{\tau_K} \right), \ 
    \boldsymbol\ell=[\ell_1,\ldots,\ell_K]^\top, 
\end{equation}
the solution to Eq.~\eqref{eq:obj} can be written as (detailed calculation in Appendix~\ref{app:sol}):
\begin{equation}\label{eq:opt_result}
    \boldsymbol\delta^*=U^\top W\left(\lambda I_K+WUU^\top W\right)^{-1}W\boldsymbol\ell.
\end{equation}

Rather than solving a $D\times D$ system in the embedding space, it requires only a $K\times K$ linear system. 
Since the number of feedback items is typically much smaller than the embedding dimension ($K\ll D$), the resulting complexity is $\mathcal O (K^2D+K^3)$, making posterior fusion efficient for practical feedback sizes.
This posterior solution determines a tangent direction and its unconstrained magnitude. 
To prevent uncertain feedback from causing an excessive change, we additionally use the set-level confidence to control the maximum angular update $\phi=\min\left(\|\boldsymbol\delta^*\|, \ g\,\theta_{\mathrm{up}}\right)$, where $\theta_{\mathrm{up}}$ is the maximum permitted update angle. 
Finally, we map the tangent correction back onto the unit sphere:
\begin{equation}
    \mathbf q'= \cos\phi\,\mathbf q + \sin\phi\, \frac{\boldsymbol\delta^*} {\|\boldsymbol\delta^*\|}.
\end{equation}

When $\boldsymbol\delta^*=\mathbf 0$, we simply set $\mathbf q'=\mathbf q$. 
Overall, DIFF separates query reformulation into three complementary decisions: which directions are supported by feedback, how strongly each piece of evidence should be trusted, and how far the query should ultimately move. 
Directional constraints preserve the complementary information carried by different feedback items, while hierarchical confidence adapts their influence according to both individual quality and collective reliability. 
The posterior objective then reconciles these observations with the original-query prior, and the final spherical update ensures that the resulting reformulation remains a controlled directional change in the normalized embedding space. 
Together, these components provide a unified mechanism for converting potentially noisy feedback vectors into a reliable query reformulation.

\section{Experimental Setup}
\subsection{Datasets and Evaluation Metrics}
We evaluate on 14 publicly available subsets of BEIR~\cite{thakur2021beir}, a widely used retrieval benchmark that covers diverse retrieval tasks and domains.
They are: ArguAna, Climate-FEVER, DBPedia, FEVER, FiQA, HotpotQA, MSMARCO, NFCorpus, Natural Questions, Quora, SCIDOCS, SciFact, Touche2020, and TREC-COVID. 
This collection provides a heterogeneous testbed spanning fact verification, question answering, argument retrieval, scientific retrieval, and other retrieval scenarios.
Following standard evaluation practice, we report NDCG@10 as the primary retrieval metric and Recall@100 as a complementary measure of retrieval coverage. 
Results are reported for individual datasets as well as aggregated across datasets when comparing the overall behavior of different reformulation methods.

\subsection{Retriever and Feedback Construction}
To examine whether our query reformulation method generalizes across models with varying capabilities and different embedding spaces, we experiment with three widely used dense retrievers: bge-base-en-v1.5\footnote{\url{https://huggingface.co/BAAI/bge-base-en-v1.5}}~\cite{xiao2024c}, e5-base-v2\footnote{\url{https://huggingface.co/intfloat/e5-base-v2}}~\cite{wang2022text}, and Qwen3-Embedding-0.6B\footnote{\url{https://huggingface.co/Qwen/Qwen3-Embedding-0.6B}}~\cite{zhang2025qwen3} (denoted as BGE, E5, and Qwen for simplicity).
For each retriever, queries, documents, and feedback texts are all encoded using the same underlying embedding model. 
All query and feedback representations are L2-normalized before reformulation, consistent with the unit-sphere formulation.
The reformulated query is used directly against the original document index, and no retriever parameters are updated during query reformulation.

We consider two complementary sources of feedback: pseudo-relevance feedback (PRF) obtained from first-pass retrieval and generative relevance feedback (GRF) produced by an LLM (i.e., Qwen3.8-27B\footnote{\url{https://huggingface.co/Qwen/Qwen3.8-27B}}~\cite{yang2025qwen3}). 
This allows us to separate the mechanism used to obtain feedback from the mechanism used to incorporate it into the query representation.
In PRF, for each query, we first perform dense retrieval using the original query representation and take the top-$K$ retrieved documents as the feedback. 
Each feedback document is encoded by the same dense retriever and supplied to the reformulation methods. 
For GRF, we consider three forms of generated feedback: HyDE, Q2D, and CoT~\cite{gao2023precise, wang2023query2doc, mackie2023generative}. 
This provides a complementary evaluation of whether DIFF and other fusion methods can operate across different feedback sources without modifying or retraining the reformulation mechanism.
We vary the number of feedback items over $K\in\{1,3,5,10,20\}$ to examine the behavior of each method as increasingly more feedback evidence—and potentially more noise—is introduced.
Specifically, we adopt a relatively moderate value of $K = 5$ in the main results (\S\ref{subsec:main_result}), and then examine the effects of different values of $K$ under various models and settings in the analysis (\S\ref{subsec:analysis}). 
The remaining implementation details are provided in Appendix~\ref{app:implement}.

\begin{table*}[htbp]
\centering
\caption{The average NDCG@10 and Recall@100 across all BEIR subsets, using different retrievers, feedback sources, and fusion methods. The best method for each retriever under different feedback sources is marked in bold and underlined. "*" indicates significantly better than the original retrieval performance without feedback and "$\dagger$" denotes significantly worse than the best method, both at the $p < 0.05$ level using the two-tailed pairwise t-test. Appendix~\ref{app:subset_detail} presents the specific results for each subset.}

\label{tab:main_results}
\resizebox{0.95\textwidth}{!}{
\begin{tabular}{l|cc|cc|cc|cc}
\toprule
Feedback from
& \multicolumn{2}{c|}{PRF}
& \multicolumn{2}{c|}{HyDE}
& \multicolumn{2}{c|}{Q2D}
& \multicolumn{2}{c}{CoT} \\
\midrule
Metrics & NDCG@10 & Recall@100 & NDCG@10 & Recall@100 & NDCG@10 & Recall@100 & NDCG@10 & Recall@100 \\
\midrule

\multicolumn{9}{l}{\textit{BGE: 0.5492 / 0.6873}}\\
\midrule
VPRF 
&0.5045$^{\dagger}$&0.6734$^{\dagger}$&0.5410$^{\dagger}$&0.6866$^{\dagger}$&0.5564$^{*\dagger}$&0.6968$^{*\dagger}$&0.5307$^{\dagger}$&0.6852$^{\dagger}$\\
Concat-P
&0.5294$^{\dagger}$&0.6825$^{\dagger}$&0.5494$^{\dagger}$&0.6871$^{\dagger}$&0.5550$^{*\dagger}$&0.6962$^{*\dagger}$&0.5350$^{\dagger}$&0.6849$^{\dagger}$\\
Concat-L
&0.5248$^{\dagger}$&0.6618$^{\dagger}$&0.5302$^{\dagger}$&0.6783$^{\dagger}$&0.5438$^{\dagger}$&0.6870$^{\dagger}$&0.5180$^{\dagger}$&0.6726$^{\dagger}$\\
OR-PRF
&0.5392$^{\dagger}$&0.6883$^{\dagger}$&0.5606$^{*\dagger}$&0.6954$^{*\dagger}$&0.5632$^{*\dagger}$&0.7001$^{*\dagger}$&0.5515$^{\dagger}$&0.6941$^{*\dagger}$\\
DIME
&0.5404$^{\dagger}$&0.6852$^{\dagger}$&0.5626$^{*\dagger}$&0.6954$^{*\dagger}$&0.5637$^{*\dagger}$&0.6961$^{*\dagger}$&0.5546$^{*\dagger}$&0.6921$^{*\dagger}$\\
Eclipse
&0.5388$^{\dagger}$&0.6792$^{\dagger}$&0.5641$^{*\dagger}$&0.6957$^{*\dagger}$&0.5638$^{*\dagger}$&0.6967$^{*\dagger}$&0.5552$^{*\dagger}$&0.6928$^{*\dagger}$\\
AnchorQE
&\underline{0.5466}$^{\dagger}$&\underline{0.6934}$^{\dagger}$&\underline{0.5700}$^{*\dagger}$&\underline{0.7027}$^{*}$&\underline{0.5698}$^{*\dagger}$&\underline{0.7027}$^{*\dagger}$&\underline{0.5646}$^{*}$&\underline{0.7000}$^{*}$\\
DIFF (ours)
&\textbf{0.5522}$^{*}$&\textbf{0.6943}$^{*}$&\textbf{0.5717}$^{*}$&\textbf{0.7034}$^{*}$&\textbf{0.5723}$^{*}$&\textbf{0.7057}$^{*}$&\textbf{0.5650}$^{*}$&\textbf{0.7003}$^{*}$\\

\midrule
\multicolumn{9}{l}{\textit{E5: 0.5231 / 0.6729}}\\
\midrule

VPRF 
&0.4775$^{\dagger}$&0.6579$^{\dagger}$&0.5088$^{\dagger}$&0.6687$^{\dagger}$&0.5214$^{\dagger}$&0.6769$^{*\dagger}$&0.5012$^{\dagger}$&0.6705$^{\dagger}$\\
Concat-P
&0.5045$^{\dagger}$&0.6626$^{\dagger}$&0.5045$^{\dagger}$&0.6549$^{\dagger}$&0.5118$^{\dagger}$&0.6688$^{\dagger}$&0.4801$^{\dagger}$&0.6441$^{\dagger}$\\
Concat-L
&0.5003$^{\dagger}$&0.6431$^{\dagger}$&0.4828$^{\dagger}$&0.6419$^{\dagger}$&0.4958$^{\dagger}$&0.6566$^{\dagger}$&0.4604$^{\dagger}$&0.6307$^{\dagger}$\\
OR-PRF
&0.5103$^{\dagger}$&0.6667$^{\dagger}$&0.5189$^{\dagger}$&0.6649$^{\dagger}$&0.5212$^{\dagger}$&0.6744$^{\dagger}$&0.5014$^{\dagger}$&0.6569$^{\dagger}$\\
DIME
&0.5117$^{\dagger}$&0.6691$^{\dagger}$&0.5383$^{*\dagger}$&0.6824$^{*\dagger}$&0.5373$^{*\dagger}$&0.6798$^{*\dagger}$&0.5314$^{*\dagger}$&0.6793$^{*\dagger}$\\
Eclipse
&0.5087$^{\dagger}$&0.6615$^{\dagger}$&0.5385$^{*\dagger}$&0.6832$^{*\dagger}$&0.5368$^{*\dagger}$&0.6788$^{*\dagger}$&0.5318$^{*\dagger}$&0.6789$^{*\dagger}$\\
AnchorQE
&\underline{0.5225}&\underline{0.6768}$^{\dagger}$&\underline{0.5487}$^{*}$&\underline{0.6884}$^{*}$&\underline{0.5488}$^{*}$&\underline{0.6879}$^{*\dagger}$&\underline{0.5388}$^{*}$&\underline{0.6860}$^{*}$\\
DIFF (ours)
&\textbf{0.5252}&\textbf{0.6796}$^{*}$&\textbf{0.5489}$^{*}$&\textbf{0.6889}$^{*}$&\textbf{0.5501}$^{*}$&\textbf{0.6908}$^{*}$&\textbf{0.5389}$^{*}$&\textbf{0.6874}$^{*}$\\

\midrule
\multicolumn{9}{l}{\textit{Qwen: 0.5492 / 0.6838}}\\
\midrule

VPRF
&0.5066$^{\dagger}$&0.6700$^{\dagger}$&0.5607$^{*\dagger}$&0.6932$^{*\dagger}$&0.5642$^{*\dagger}$&0.6984$^{*\dagger}$&0.5607$^{*\dagger}$&0.6983$^{*\dagger}$\\
Concat-P
&0.5282$^{\dagger}$&0.6811$^{\dagger}$&0.5652$^{*\dagger}$&\underline{0.7007}$^{*}$&\underline{0.5734}$^{*}$&\textbf{0.7040}$^{*}$&0.5606$^{*\dagger}$&0.6988$^{*\dagger}$\\
Concat-L
&0.5196$^{\dagger}$&0.6625$^{\dagger}$&0.5490$^{\dagger}$&0.6925$^{*\dagger}$&0.5589$^{*\dagger}$&0.6978$^{*\dagger}$&0.5442$^{\dagger}$&0.6891$^{*\dagger}$\\
OR-PRF
&0.5388$^{\dagger}$&0.6861$^{\dagger}$&0.5702$^{*\dagger}$&0.6989$^{*}$&0.5699$^{*\dagger}$&0.6970$^{*\dagger}$&\underline{0.5714}$^{*}$&\textbf{0.7025}$^{*}$\\
DIME
&0.5393$^{\dagger}$&0.6810$^{\dagger}$&0.5692$^{*\dagger}$&0.6947$^{*\dagger}$&0.5670$^{*\dagger}$&0.6947$^{*\dagger}$&0.5649$^{*\dagger}$&0.6950$^{*\dagger}$\\
Eclipse
&0.5410$^{\dagger}$&0.6772$^{\dagger}$&0.5697$^{*\dagger}$&0.6936$^{*\dagger}$&0.5657$^{*\dagger}$&0.6933$^{*\dagger}$&0.5650$^{*\dagger}$&0.6952$^{*\dagger}$\\
AnchorQE
&\underline{0.5459}$^{\dagger}$&\underline{0.6865}&
\underline{0.5720}$^{*\dagger}$&0.6970$^{*\dagger}$&0.5687$^{*\dagger}$&0.6979$^{*\dagger}$&0.5663$^{*\dagger}$&0.6969$^{*\dagger}$\\
DIFF (ours)
&\textbf{0.5491}&\textbf{0.6882}$^{*}$&\textbf{0.5777}$^{*}$&\textbf{0.7012}$^{*}$&\textbf{0.5738}$^{*}$&\underline{0.7021}$^{*}$&\textbf{0.5727}$^{*}$&\underline{0.7014}$^{*}$\\
\bottomrule
\end{tabular}
}
\end{table*}

\subsection{Baselines}
We compare DIFF against the original dense retriever and a range of fusion-based training-free query reformulation methods. 
To ensure a controlled comparison, all methods use the same pretrained encoder, initial document ranking, feedback set, and document index under each experimental configuration. 
Specifically, these methods are:
\begin{itemize}[leftmargin=*]
\item \textbf{Original}. We use the original query embedding without feedback reformulation, providing the base retrieval performance of each dense retriever.
\item \textbf{VPRF}~\cite{li2023pseudo,li2025llm}. VPRF directly combines the original query representation with feedback document embeddings through vector-level averaging or interpolation.
\item \textbf{Concat-P}. Pairwise concatenation: concatenate the query with each feedback separately, feed the result to the dense retriever, and then average the embeddings of all query-document pairs.
\item \textbf{Concat-L}. Lairwise concatenation: concatenate the query with all feedback documents to obtain a global context, then feed it to the retriever (similar to ANCE-PRF~\cite{yu2021improving}).
\item \textbf{OR-PRF}~\cite{zhang2026query}. It constructs query updates from feedback components orthogonal to the original query, providing a relevant comparison for assessing whether DIFF benefits from jointly modeling directional evidence rather than directly aggregating orthogonal residuals.
\item \textbf{DIME}~\cite{faggioli2024dimension}. DIME selects the important individual embedding dimensions based on the product across each dimension of the query and feedback embeddings. 
\item \textbf{Eclipse}~\cite{d2025eclipse}. Based on DIME, Eclipse further incorporates a comparison between top positive feedback and bottom negative feedback when calculating dimension importance.
\item  \textbf{AnchorQE}~\cite{sun2026query}. AnchorQE estimates feedback contributions using query- or anchor-based retrieval scores to interpolate query and feedback embeddings. Note that AnchorQE requires additional retrieval operations to obtain feedback-based retrieval scores, introducing extra retrieval signals beyond the given feedback.
to other methods.
\end{itemize}
\begin{table}[htbp]
\centering
\caption{Ablation of DIFF components.}
\label{tab:ablation}
\resizebox{1.0\columnwidth}{!}{
\begin{tabular}{c|cc|cc}
\toprule
Feedback from & \multicolumn{2}{c|}{PRF} & \multicolumn{2}{c}{HyDE} \\
\midrule
Metrics & NDCG@10 & Recall@100 & NDCG@10 & Recall@100 \\
\midrule 
\multicolumn{5}{l}{\textit{BGE: 0.5492 / 0.6873}}\\
\midrule
DIFF & 0.5522 & 0.6943 & 0.5717 & 0.7034 \\
\midrule 
\begin{tabular}[c]{@{}c@{}} Point-target Fusion \\ $\boldsymbol\delta=\sum\tau_i \ell_i\mathbf u_i$\end{tabular}& 0.5439 & 0.6888 & 0.5646 & 0.6978 \\
\begin{tabular}[c]{@{}c@{}}w/o Confidence \\ $\tau_i=1$\end{tabular} & 0.5454 & 0.6925 & 0.5703 & 0.7013 \\
\begin{tabular}[c]{@{}c@{}}w/o Local Quality \\ $a_i=1$\end{tabular} & 0.5462 & 0.6932 & 0.5481 & 0.6926 \\
\begin{tabular}[c]{@{}c@{}}w/o Robust Norm\\ $a_i=\sigma(s_i)$\end{tabular} & 0.5487 & 0.6936 & 0.5684 & 0.7007 \\
\begin{tabular}[c]{@{}c@{}}w/o Set-level Confidence \\ $g=1$\end{tabular} & 0.5483 & 0.6933 & 0.5565 & 0.6953 \\
\begin{tabular}[c]{@{}c@{}}w/o Geometric Confidence \\ $g=\bar a$\end{tabular} & 0.5502 & 0.6925 & 0.5701 & 0.7012 \\
\begin{tabular}[c]{@{}c@{}}w/o Average Local Quality\\ $g=g_{\text{geo}}$\end{tabular} & 0.5495 & 0.6926 & 0.5580 & 0.6960 \\
\bottomrule
\end{tabular}
}
\end{table}

\section{Results and Analysis}
\subsection{Main Results} \label{subsec:main_result} 
Table~\ref{tab:main_results} presents the overall retrieval performance across all BEIR subsets under different dense retrievers and feedback sources. 
Overall, DIFF delivers the strongest and most consistent performance. 
This trend holds for both PRF and all three forms of GRF, as well as across different embedding models, suggesting that the proposed directional formulation is not tied to a particular retriever or feedback construction strategy.

These results also highlight the importance of how feedback is utilized. 
Generating high-quality feedback with LLMs alone is insufficient, and how to incorporate such feedback is equally important.
Specifically, simple vector averaging or concatenation does not consistently improve retrieval performance and can even degrade it, particularly when the feedback contains noisy information (especially PRF). 
Methods that more carefully control the contribution of feedback generally exhibit more robust behavior. 
DIFF further improves upon these approaches by treating feedback as directional evidence instead of complete target representations and by adapting its influence according to both individual feedback quality and set-level confidence.
In Appendix~\ref{app:change}, we further report the per-query changes in NDCG@10 ($\Delta$NDCG@10) relative to the original retrieval results, indicating that DIFF behaves conservatively under noisy feedback and allows more beneficial updates when the feedback is more reliable.
Its consistent behavior across PRF and GRF supports our motivation that a feedback item may provide useful evidence for a particular correction direction without constituting a complete replacement for the original query representation.

Among the competing methods, AnchorQE is the closest to DIFF in overall performance, but requires additional retrieval operations to estimate feedback contributions. 
These operations provide additional retrieval signals beyond the given feedback, particularly in the GRF setting where feedback is generated independently of the document collection. 
Moreover, these additional retrieval operations introduce substantial computational overhead, as further demonstrated in the efficiency analysis in Appendix~\ref{app:efficiency}.
In contrast, DIFF requires no additional retrieval operations and achieves stronger overall performance.
It suggests that effectively modeling feedback geometry and reliability can be more beneficial than incorporating additional retrieval signals for feedback weighting.

Taken together, these results demonstrate that DIFF provides a robust feedback-utilization mechanism across heterogeneous retrieval conditions. 
Rather than relying on a specific feedback source or additional retrieval signals, it consistently extracts useful query corrections from the feedback already available. 
This cross-retriever and cross-feedback-source behavior is particularly important for our goal of developing a general reformulation framework that can be applied on top of diverse dense retrieval systems.

\begin{figure*}[t]
    \centering
    \includegraphics[width=0.95\textwidth]{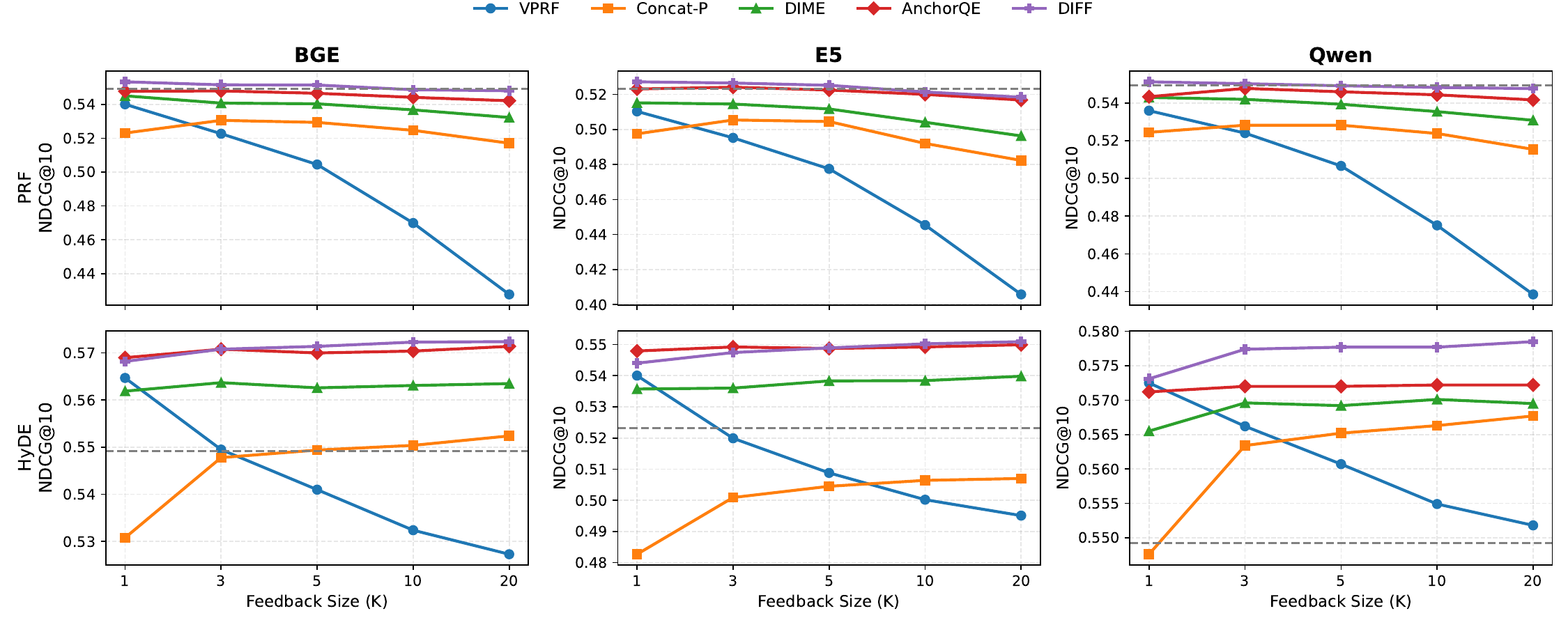}
    \caption{Effect of feedback size $K$ on the performace.}
    \label{fig:numfeed}
\end{figure*}

\subsection{Ablation Study}
We conduct an ablation study to examine the contribution of the major components. 
As shown in Table~\ref{tab:ablation}, the complete DIFF consistently achieves the best performance, while removing or replacing individual components generally degrades performance. 
It provides evidence for both key aspects of our design: representing feedback through directional constraints rather than point targets, and controlling its influence according to feedback reliability.

We first examine DIFF's central modeling choice: replacing directional posterior fusion with point-target fusion, where we directly aggregate the correction vectors as $\boldsymbol\delta=\sum_i\tau_i\ell_i\mathbf u_i$. 
This variant leads to one of the largest performance drops under both feedback sources.
Point-target fusion directly pulls the query toward feedback vectors and therefore implicitly suppresses components not expressed by each target. 
In contrast, DIFF imposes a more selective directional constraint, requiring the inferred correction to agree with each feedback item only along the direction it supports. 
The remaining degrees of freedom can consequently accommodate complementary evidence from other feedback items. 
The consistent advantage of DIFF over point-target fusion empirically supports this distinction between fitting feedback as a target and using feedback as directional evidence.

The remaining ablations demonstrate the importance of controlling feedback reliability at both the individual and set levels. 
Assigning equal precision to all feedback ($\tau_i=1$) causes more evident degradation under PRF than under HyDE, consistent with PRF containing noisier and more heterogeneous pseudo-relevant documents, whereas LLM-generated feedback is generally more consistent and focused on the input query. 
This suggests that confidence estimation becomes particularly valuable when feedback quality is uneven. 
Moreover, despite this degradation, the confidence-free variant still substantially outperforms simple point-target fusion methods such as VPRF and Concat, further demonstrating the advantage of directional fitting over directly aggregating feedback representations.
Within the hierarchy, removing local quality prevents DIFF from distinguishing the relative reliability of individual feedback items, while replacing robust normalization with raw query-feedback similarity also reduces performance, supporting query-wise calibration of feedback quality. 
Set-level confidence provides complementary control over whether the feedback collectively warrants a substantial update.
Removing either its geometric consistency or average local-quality component leads to degradation, and removing the set-level gate altogether is consistently inferior to the full model. 
Overall, these results show that the hierarchical design allows DIFF to account for both which feedback items should be trusted and how strongly the feedback set as a whole should influence the query, complementing the directional formulation with robustness to noisy or inconsistent feedback.

\subsection{Analysis} \label{subsec:analysis}
\subsubsection{Effect of Feedback Size ($K$)} \hfill \newline
We investigate the sensitivity of different reformulation methods to the number of feedback items $K\in\{1,3,5,10,20\}$. 
As shown in Figure~\ref{fig:numfeed}, DIFF remains consistently strong and stable across different feedback sizes and retrievers.
In contrast, several baselines are substantially more sensitive to $K$. 
This is particularly evident for VPRF under PRF, whose performance deteriorates rapidly as the feedback set grows. 
The different behaviors can be attributed to the quality of the underlying feedback: lower-ranked pseudo-relevant documents become increasingly noisy as more retrieval results are included, such that adding feedback does not necessarily provide additional useful evidence. 
Indeed, PRF-based reformulation frequently performs below the original retriever, and its performance generally decreases with larger $K$, further illustrating the risk of blindly incorporating pseudo-relevant documents. 
On the contrary, GRF exhibits a substantially different pattern: performance is generally higher and considerably more stable as $K$ increases, suggesting that the generated feedback provides more focused and reliable information than retrieved feedback. 
These observations also highlight that DIFF's directional formulation and hierarchical confidence mechanism allow additional feedback to be incorporated without forcing every feedback item to exert equal or unrestricted influence, making the method less susceptible to the increasing noise associated with larger feedback sets. 
Note that AnchorQE also exhibits stable performance with the aid of additional retrieval operations.
Overall, the results show that DIFF is effective for a particular choice of $K$, and it remains robust as the amount of feedback (together with noise) increases.
Additionally, an analysis of the other hyperparameters (i.e., $\lambda$ and $\theta_{\text{up}}$) can be found in Appendix~\ref{app:hyper}, and the efficiency comparison is reported in Appendix~\ref{app:efficiency}.

\begin{figure}[t]
    \centering
    \includegraphics[width=1.0\columnwidth]{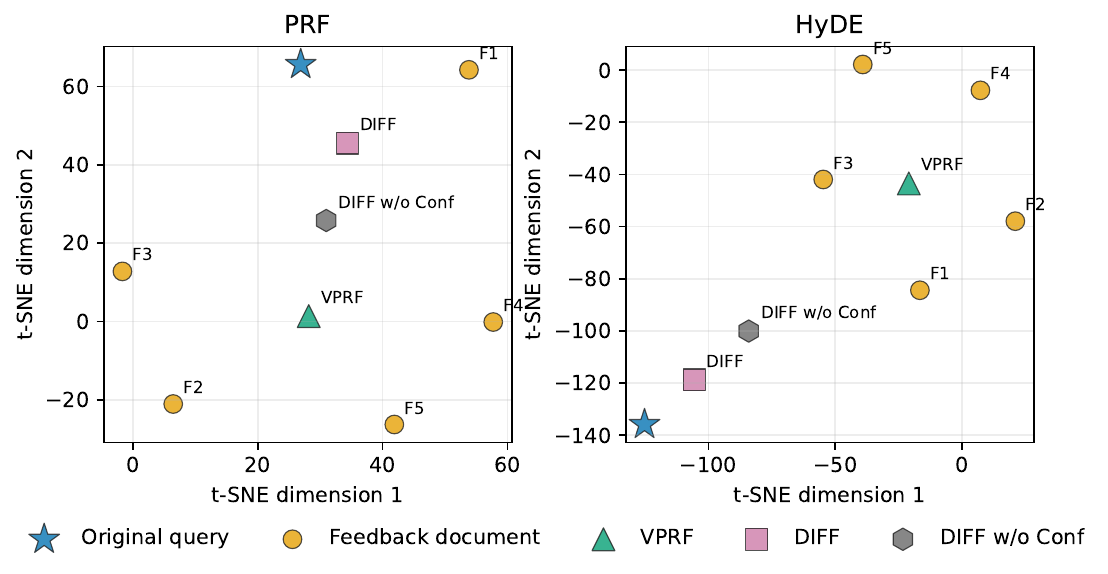}
    \vspace{-2mm}
    \caption{A case study of the geometric behavior.}
    \label{fig:tsne}
    \vspace{-2mm}
\end{figure}

\subsubsection{Case Study: Geometric Behavior of Query Reformulation} \hfill \newline
To provide an intuitive view of how different reformulation strategies modify the query representation, we visualize a representative case in Figure~\ref{fig:tsne}. 
We compare VPRF, DIFF without confidence control, and the full DIFF under both PRF and HyDE feedback. 
Notably, VPRF and DIFF w/o Conf differ primarily in their fusion mechanisms (i.e., Eq~\eqref{eq:two_obj} with $\tau_i=1$), allowing us to visually contrast the behaviors of these two strategies.
In PRF, the feedback documents are relatively dispersed, indicating substantial variation in the feedback signal. 
VPRF is pulled toward the feedback set as a whole, while DIFF produces a more controlled displacement from the original query. 
Removing confidence leads to a noticeably larger update, suggesting that directional fitting alone can preserve the structure of complementary feedback, whereas hierarchical confidence further limits the influence of uncertain evidence. 
In the HyDE case where generated feedback is more coherent, both DIFF variants move the query in a similar direction, but the full model still produces a more conservative update.
Overall, these cases illustrate how directional fusion produces selective query corrections, while confidence control adapts their magnitude to feedback reliability.
\section{Related Work}
\subsection{Dense Retrieval}
Dense retrieval represents queries and documents as continuous semantic vectors and performs retrieval according to their similarity in a shared embedding space~\cite{fang2024scaling,zhao2024dense}. 
Early dual-encoder approaches such as DPR demonstrated the effectiveness of independently encoding queries and passages for efficient maximum-inner-product search~\cite{karpukhin2020dense,zhan2020repbert}, while subsequent work improved dense retrievers through better negative sampling~\cite{xiong2020approximate,zhan2021optimizing,qu2021rocketqa} and large-scale pretraining~\cite{wang2022text,izacard2021unsupervised,xiao2022retromae,gao2021condenser,muennighoff2025generative}. 
More recent embedding models further improve generalization across heterogeneous retrieval tasks through large-scale weak supervision and instruction-aware representation learning~\cite{xiao2024c,zhang2025qwen3,behnamghader2024llm2vec}. 
Despite these advances, retrieval effectiveness still depends critically on the quality of the query representation: short or under-specified queries may not fully express the underlying information need. 
Our work is complementary to improvements in retriever training: we keep the pretrained dense retriever fixed and study how its query representation can be effectively refined using external feedback.

\subsection{Query Reformulation}
Query reformulation aims to enrich or modify the original query using additional information and can broadly be viewed through two sources of feedback~\cite{li2026query}. 
Pseudo-relevance feedback (PRF) assumes that top-ranked documents from an initial retrieval contain relevant information and uses them to refine the query. 
Classical approaches operate in the lexical space through term reweighting or expansion~\cite{abdul2004umass,rocchio1971relevance}, while recent methods extend PRF to neural and dense retrieval by learning feedback-aware query encoders or directly manipulating query and feedback representations in the embedding space~\cite{yu2021improving,li2025llm}. 
In parallel, generative relevance feedback (GRF) leverages the parametric knowledge of LLMs to generate auxiliary information without relying exclusively on the initial retrieval results. 
Representative approaches generate hypothetical documents, pseudo-documents or answers, query expansions, or other forms of relevant content that are subsequently incorporated into retrieval, such as HyDE~\cite{gao2023precise}, Query2Doc~\cite{wang2023query2doc}, CoT~\cite{mackie2023generative}, Lamer~\cite{shen2023large}, and GPRF~\cite{tu2026generalized}. 
These two lines of work have substantially expanded \textit{how useful feedback can be obtained}, from retrieved documents to LLM-generated knowledge.
However, once feedback is available, an equally important question is \textit{how to incorporate it to modify the query representation}, especially for dense retrievers that operate in the semantic embedding space beyond the explicit vocabulary space. 
Existing approaches often simply aggregate or interpolate feedback representations~\cite{li2025llm,li2023pseudo}, implicitly treating them as target points for query movement. 
In contrast, our DIFF models feedback vectors as confidence-weighted directional constraints rather than complete target representations. 
It also serves as a general reformulation mechanism on top of different feedback sources without retraining the underlying dense retriever.
\section{Conclusions}
In this paper, we introduce Directional Posterior Feedback Fusion (DIFF), a framework for feedback-based query reformulation in dense retrieval. 
Moving beyond conventional pointwise fusion, it models feedback as directional evidence in the tangent space of the original query, allowing each feedback item to constrain only the correction direction it supports while preserving complementary information from other feedback. 
Experiments across diverse retrieval settings demonstrate the effectiveness and robustness of DIFF and validate the benefits of directional modeling and confidence-aware fusion. 
Future work includes extending the directional formulation beyond local tangent-space modeling to capture more complex geometry in embedding spaces, and investigating richer mechanisms for estimating feedback reliability or jointly exploiting heterogeneous feedback sources.

\bibliographystyle{ACM-Reference-Format}
\bibliography{sample-base}

\appendix

\begin{algorithm}[t]\
\caption{Directional Posterior Feedback Fusion (DIFF)}\label{algo}
\begin{algorithmic}[1]
\REQUIRE Query vector $\mathbf q\in\mathbb R^D$; 
feedback vectors $\{\mathbf f_i\}_{i=1}^{K}$; 
prior strength $\lambda$; 
maximum update angle $\theta_{\mathrm{up}}$
\ENSURE Reformulated query $\mathbf q'$

\STATE \textbf{// Directional evidence~(\S\ref{subsec:direction})}
\FOR{$i=1,\ldots,K$}
    \STATE $s_i \leftarrow \mathbf q^\top\mathbf f_i$
    \STATE $\mathbf r_i \leftarrow \mathbf f_i-s_i\mathbf q$
    \STATE $\ell_i \leftarrow \|\mathbf r_i\|_2 ,\quad
           \mathbf u_i \leftarrow \mathbf r_i/\|\mathbf r_i\|_2$
\ENDFOR

\STATE \textbf{// Hierarchical confidence~(\S\ref{subsec:conf})}
\STATE $\{a_i\}_{i=1}^{K}
       \leftarrow \operatorname{RobustNorm}(\{s_i\}_{i=1}^{K})$
\STATE $\boldsymbol\mu \leftarrow \frac{1}{K}\sum_i\mathbf \ell_i\mathbf u_i$
\STATE $S \leftarrow \|\boldsymbol\mu\|_2^2,\quad
       V \leftarrow \frac{1}{K}\sum_i\|\ell_i\mathbf u_i-\boldsymbol\mu\|_2^2$
\STATE $g \leftarrow
       \dfrac{S}{S+V/K}
       \left(\frac{1}{K}\sum_i a_i\right)$
\STATE $\tau_i \leftarrow g a_i,\quad \forall i$

\STATE \textbf{// Posterior inference~(\S\ref{subsec:update})}
\STATE $U \leftarrow
       [\mathbf u_1,\ldots,\mathbf u_K]^\top,\quad
       W \leftarrow \operatorname{diag}(\sqrt{\tau_1},\ldots,\sqrt{\tau_K})$
\STATE $\boldsymbol\ell \leftarrow [\ell_1,\ldots,\ell_K]^\top$
\STATE Solve
       $\left(\lambda I_K+WUU^\top W\right)\mathbf y
       =W\boldsymbol\ell$
\STATE $\boldsymbol\delta^* \leftarrow U^\top W\mathbf y$

\STATE \textbf{// Spherical update~(\S\ref{subsec:update})}
\IF{$\|\boldsymbol\delta^*\|_2=0$}
    \STATE $\mathbf q'\leftarrow\mathbf q$
\ELSE
    \STATE $\phi \leftarrow \min\left(\|\boldsymbol\delta^*\|_2,\, g\theta_{\mathrm{up}}\right)$
    \STATE $\mathbf q' \leftarrow \cos(\phi)\mathbf q + \sin(\phi) \dfrac{\boldsymbol\delta^*}{\|\boldsymbol\delta^*\|_2}$
\ENDIF

\RETURN $\mathbf q'$
\end{algorithmic}
\end{algorithm}

\begin{figure}[htbp]
    \centering
    \begin{subfigure}[b]{0.48\textwidth}
        \centering
        \includegraphics[width=\textwidth]{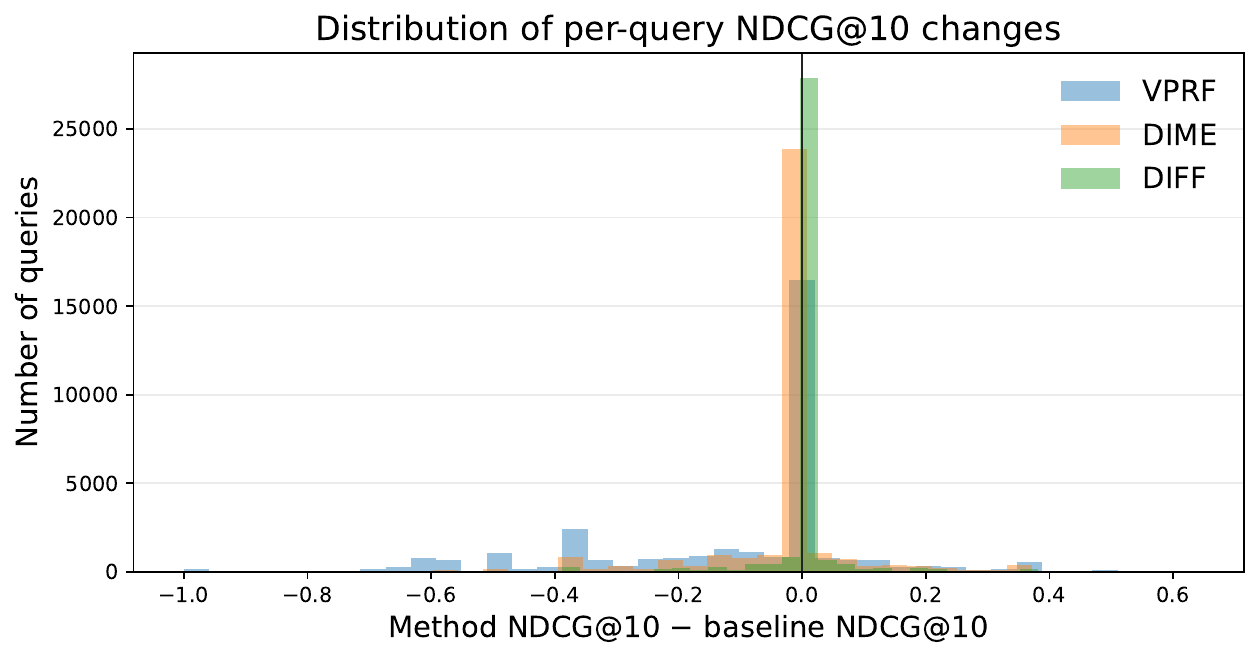}
        \caption{PRF $\Delta$NDCG@10 Distribution}
    \end{subfigure}
    \begin{subfigure}[b]{0.48\textwidth}
        \centering
        \includegraphics[width=\textwidth]{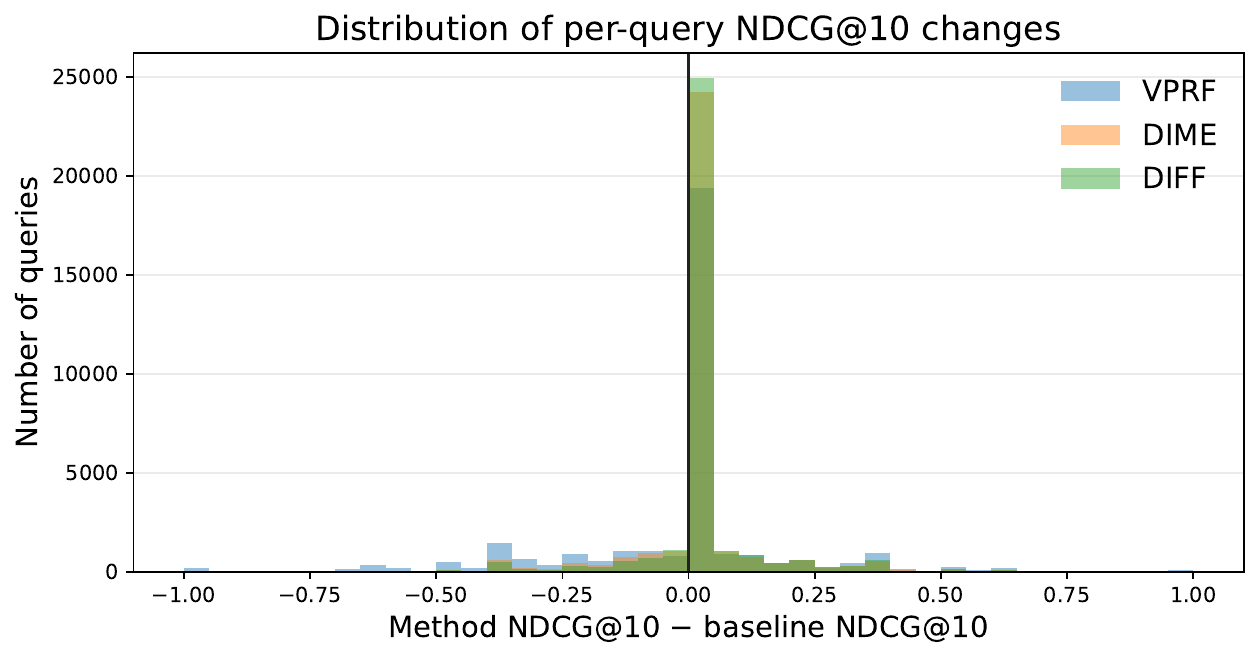}
        \caption{HyDE $\Delta$NDCG@10 Distribution}
    \end{subfigure}
    \caption{Distribution of NDCG@10 changes with BGE.}
    \label{fig:query_delta}
\end{figure}

\begin{figure}[htbp]
    \centering
     \begin{subfigure}[b]{0.48\textwidth}
        \centering
        \includegraphics[width=\textwidth]{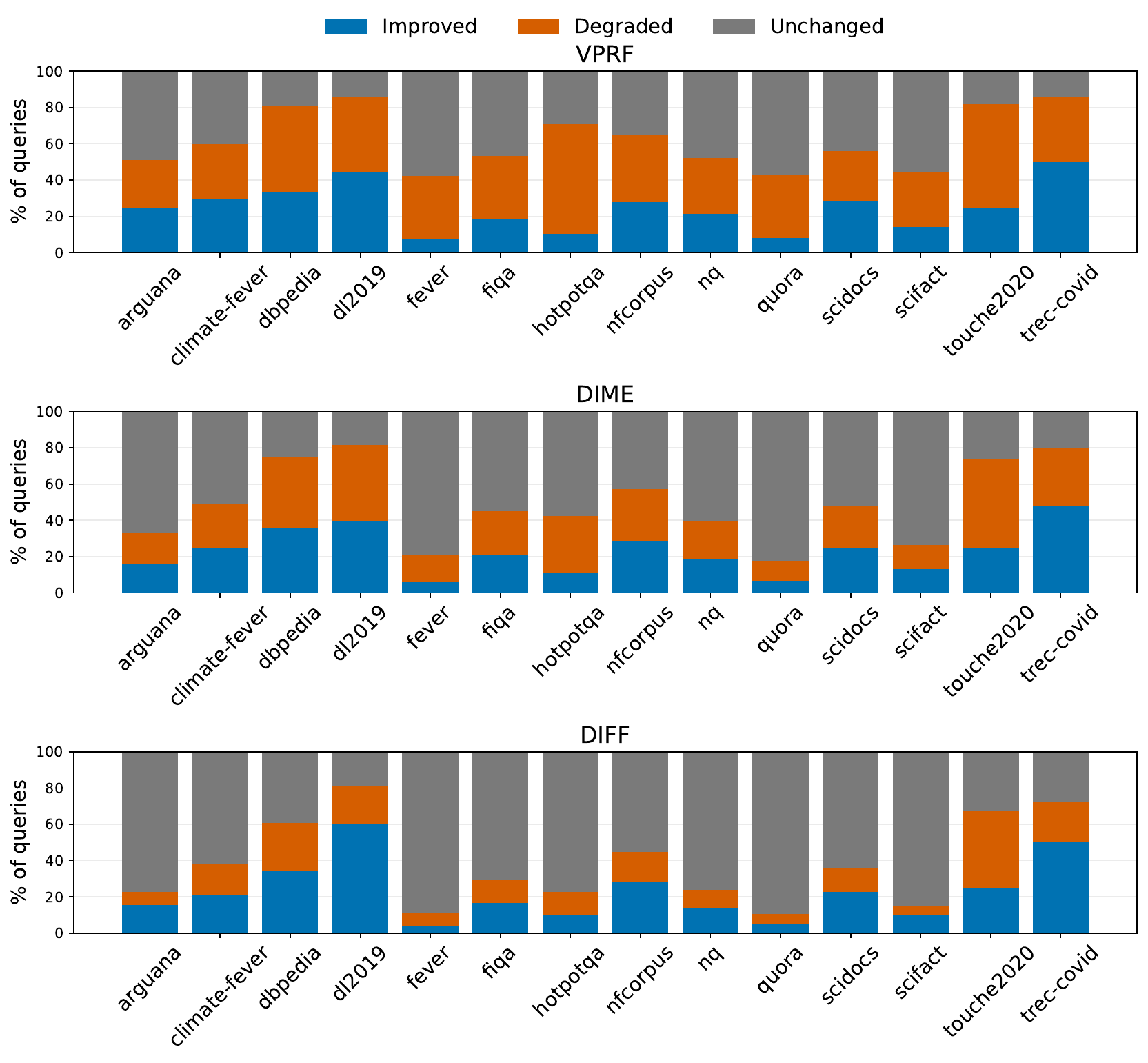}
        \caption{PRF $\Delta$NDCG@10}
    \end{subfigure}
    \begin{subfigure}[b]{0.48\textwidth}
        \centering
        \includegraphics[width=\textwidth]{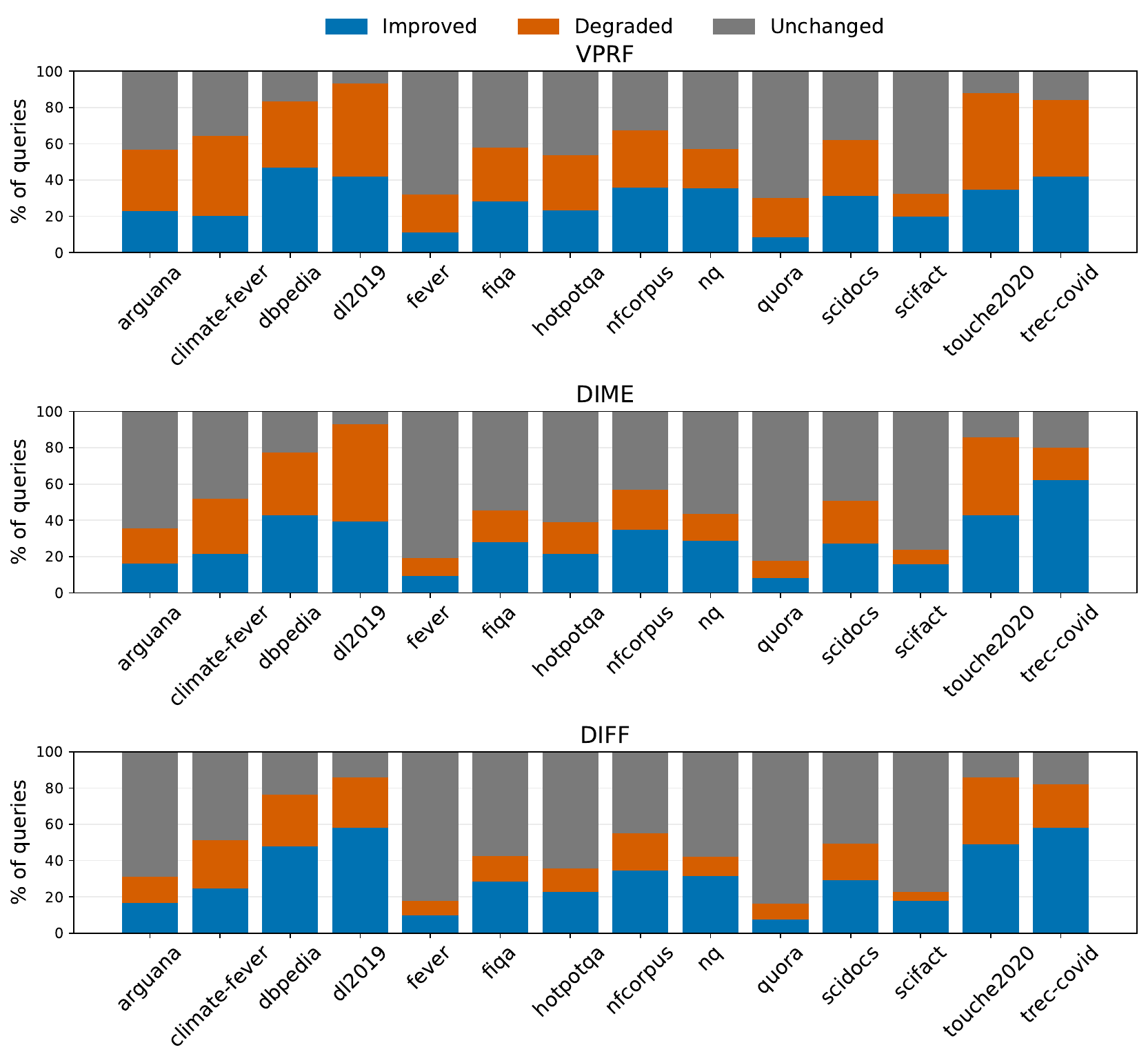}
        \caption{HyDE $\Delta$NDCG@10}
    \end{subfigure}
    \caption{NDCG@10 changes for all subsets with BGE.}
    \label{fig:subset_delta}
\end{figure}

\section{Mathematical Derivations} 
\subsection{Decomposition of Pointwise Feedback Fitting}\label{app:old_obj}
Let $\Pi_i=I-\mathbf u_i\mathbf u_i^\top$.
Since $\|\mathbf u_i\|=1$, $\Pi_i$ is the orthogonal projector onto $\mathbf u_i^{\perp}$: it is symmetric and idempotent, and $\mathbf u_i^\top\Pi_i=\mathbf 0^\top$.
Hence
\begin{equation}\label{eq:pf_split}
\begin{aligned}
\boldsymbol\delta-\ell_i\mathbf u_i&=(\mathbf u_i^\top\boldsymbol\delta-\ell_i)\,\mathbf u_i+\Pi_i\boldsymbol\delta ,\\
\|\boldsymbol\delta-\ell_i\mathbf u_i\|^2&=(\mathbf u_i^\top\boldsymbol\delta-\ell_i)^2+\|\Pi_i\boldsymbol\delta\|^2 ,
\end{aligned}
\end{equation}
where the second line follows from the first by the Pythagorean identity, because its two summands are orthogonal; this is Eq.~\eqref{eq:decomp}.
The first term measures the mismatch along the direction supported by feedback $i$; the second penalizes every component of $\boldsymbol\delta$ orthogonal to $\mathbf u_i$, including those supported by other items.
Fitting the full target $\ell_i\mathbf u_i$ therefore suppresses complementary corrections that item $i$ does not express, whereas DIFF keeps only the directional observation $\mathbf u_i^\top\boldsymbol\delta\approx\ell_i$.

\subsection{Proof of Theorem~\ref{thm:interference}}\label{app:proof}
Throughout, the assumptions of Theorem~\ref{thm:interference} are in force.
Write $A=\lambda I+\sum_i\tau_i\mathbf u_i\mathbf u_i^\top$, $P=\sum_i\tau_i\Pi_i$ and $c=\lambda+\sum_i\tau_i$, so that $A+P=cI$.
Items with $\tau_i=0$ enter none of $\mathcal L_{\mathrm{pt}}$, $\mathcal L_{\mathrm{dir}}$, $A$, $P$, $c$, $\mathbf b$; item $i$ is \emph{supported} if $\tau_i>0$ (at least one is, since $\mathbf b\neq\mathbf 0$), and the supported directions are \emph{collinear} if $\mathbf u_i\in\{\mathbf u,-\mathbf u\}$ for some unit vector $\mathbf u$ and all supported $i$.

\begin{lemma}\label{lem:structure}
(a) $\lambda I\preceq A\preceq cI$; in particular, $A$ is invertible.
(b) $\mathbf b\in T_{\mathbf q}$, and $A^{-1}$ maps $T_{\mathbf q}$ into itself.
(c) $P\mathbf v=\mathbf 0$ iff $\Pi_i\mathbf v=\mathbf 0$ for every supported $i$.
(d) $\mathcal L_{\mathrm{pt}}(\boldsymbol\delta)=\mathcal L_{\mathrm{dir}}(\boldsymbol\delta)+\tfrac12\boldsymbol\delta^\top P\boldsymbol\delta$ for all $\boldsymbol\delta$, and the minimizers over $T_{\mathbf q}$ are unique, coincide with the unconstrained minimizers, and satisfy:
\begin{equation}\label{eq:pf_minimizers}
\boldsymbol\delta^*_{\mathrm{dir}}=A^{-1}\mathbf b,\qquad
\boldsymbol\delta^*_{\mathrm{pt}}=\tfrac1c\,\mathbf b,\qquad
\boldsymbol\delta^*_{\mathrm{dir}}-\boldsymbol\delta^*_{\mathrm{pt}}=\tfrac1c\,A^{-1}P\,\mathbf b .
\end{equation}
\end{lemma}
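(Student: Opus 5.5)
We prove the four parts in order (a)–(d); each later part uses the earlier ones, and all are elementary linear algebra on the quadratic objectives in Eq.~\eqref{eq:two_obj}.

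\emph{Part (a).} Each $\tau_i\mathbf u_i\mathbf u_i^\top$ is positive semidefinite with eigenvalues in $[0,\tau_i]$, since $\|\mathbf u_i\|=1$. Summing and adding $\lambda I$ gives $\lambda I\preceq A\preceq(\lambda+\sum_i\tau_i)I=cI$. Because $\lambda>0$, $A$ is positive definite and hence invertible.

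\emph{Part (b).} $\mathbf b$ is a linear combination of vectors $\mathbf u_i\in T_{\mathbf q}$, so $\mathbf b\in T_{\mathbf q}$. Next, $A\mathbf q=\lambda\mathbf q$, because every $\mathbf u_i^\top\mathbf q=0$. Since $A$ is symmetric, it therefore also preserves the orthogonal complement $\mathbf q^\perp=T_{\mathbf q}$. Restricted to $T_{\mathbf q}$, $A$ is injective, hence bijective on that finite-dimensional subspace, so $A^{-1}$ maps $T_{\mathbf q}$ into itself.

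\emph{Part (c).} Each $\Pi_i$ is an orthogonal projector, so $\mathbf v^\top P\mathbf v=\sum_i\tau_i\|\Pi_i\mathbf v\|^2$. Suppose $P\mathbf v=\mathbf 0$. Then this sum is zero, and since every term is nonnegative, $\Pi_i\mathbf v=\mathbf 0$ whenever $\tau_i>0$. Conversely, if $\Pi_i\mathbf v=\mathbf 0$ for every supported $i$, then $P\mathbf v=\sum_{\tau_i>0}\tau_i\Pi_i\mathbf v=\mathbf 0$ directly.

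\emph{Part (d).} Sum the decomposition Eq.~\eqref{eq:decomp} (Appendix~\ref{app:old_obj}) with weights $\tau_i/2$ and use $\|\Pi_i\boldsymbol\delta\|^2=\boldsymbol\delta^\top\Pi_i\boldsymbol\delta$. This gives $\mathcal L_{\mathrm{pt}}=\mathcal L_{\mathrm{dir}}+\tfrac12\boldsymbol\delta^\top P\boldsymbol\delta$. Expanding the squares yields
\[
\mathcal L_{\mathrm{dir}}(\boldsymbol\delta)=\tfrac12\boldsymbol\delta^\top A\boldsymbol\delta-\mathbf b^\top\boldsymbol\delta+\mathrm{const}.
\]
Since $A+P=cI$, the same expansion gives
\[
\mathcal L_{\mathrm{pt}}(\boldsymbol\delta)=\tfrac c2\|\boldsymbol\delta\|^2-\mathbf b^\top\boldsymbol\delta+\mathrm{const}.
\]
Both are strictly convex quadratics on $\mathbb R^D$, by (a) and because $c\ge\lambda>0$. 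Their unique unconstrained minimizers are $A^{-1}\mathbf b$ and $\mathbf b/c$. By (b), both vectors lie in $T_{\mathbf q}$. A strictly convex function's global minimizer, if it lies in a subspace, is also the unique minimizer over that subspace, so the minimizers over $T_{\mathbf q}$ are these same vectors.

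For the difference formula, write $\tfrac1c=\tfrac1c A^{-1}A$. Then
\[
A^{-1}\mathbf b-\tfrac1c\mathbf b=\tfrac1c A^{-1}(cI-A)\mathbf b=\tfrac1c A^{-1}P\mathbf b.
\]

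The only step requiring care is (b), the invariance of $T_{\mathbf q}$ under $A^{-1}$. Without it, the constrained minimizers could differ from the closed forms in Eq.~\eqref{eq:pf_minimizers}. Part (b) is settled by the symmetric-invariant-subspace argument: $\mathbf q$ is an eigenvector of $A$, so its orthogonal complement is $A$-invariant.
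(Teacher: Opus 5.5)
Your proposal is correct and follows essentially the same route as the paper's proof, part by part: the PSD sandwich for (a), $\mathbf q$ as an eigenvector of $A$ for (b), the nonnegative sum $\mathbf v^\top P\mathbf v=\sum_i\tau_i\|\Pi_i\mathbf v\|^2$ for (c), and $A+P=cI$ plus strict convexity for (d). The only differences are cosmetic: in (b) you argue via $A$-invariance of $T_{\mathbf q}$ and bijectivity on that subspace rather than directly via $A^{-1}\mathbf q=\lambda^{-1}\mathbf q$ and symmetry of $A^{-1}$, and in (d) you expand the quadratics instead of computing gradients.
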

\begin{proof}
(a) Since $0\preceq\mathbf u_i\mathbf u_i^\top\preceq I$ and $\tau_i\ge0$, we have $0\preceq\sum_i\tau_i\mathbf u_i\mathbf u_i^\top\preceq(\sum_i\tau_i)\,I$; adding $\lambda I$ gives the claim.
(b) $\mathbf b$ is a combination of the $\mathbf u_i\in T_{\mathbf q}$. As $\mathbf u_i^\top\mathbf q=0$, $A\mathbf q=\lambda\mathbf q$, so $A^{-1}\mathbf q=\lambda^{-1}\mathbf q$ and, for $\boldsymbol\delta\in T_{\mathbf q}$, $\mathbf q^\top A^{-1}\boldsymbol\delta=(A^{-1}\mathbf q)^\top\boldsymbol\delta=0$ by symmetry of $A^{-1}$.
(c) $\mathbf v^\top P\mathbf v=\sum_i\tau_i\|\Pi_i\mathbf v\|^2$ is a sum of non-negative terms: $P\mathbf v=\mathbf 0$ forces every term with $\tau_i>0$ to vanish, and conversely $\Pi_i\mathbf v=\mathbf 0$ for all supported $i$ gives $P\mathbf v=\sum_i\tau_i\Pi_i\mathbf v=\mathbf 0$.
(d) Applying Eq.~\eqref{eq:decomp} to each term of $\mathcal L_{\mathrm{pt}}$ in Eq.~\eqref{eq:two_obj} and using $\|\Pi_i\boldsymbol\delta\|^2=\boldsymbol\delta^\top\Pi_i\boldsymbol\delta$ gives the identity.
Differentiating, $\nabla\mathcal L_{\mathrm{dir}}=A\boldsymbol\delta-\mathbf b$ and $\nabla\mathcal L_{\mathrm{pt}}=(A+P)\boldsymbol\delta-\mathbf b=c\,\boldsymbol\delta-\mathbf b$, with Hessians $A\succ0$ and $cI\succ0$.
Both objectives are thus strictly convex with unique unconstrained minimizers $A^{-1}\mathbf b$ and $\mathbf b/c$; these lie in $T_{\mathbf q}$ by (b), so they are also the (unique) minimizers over $T_{\mathbf q}$.
Finally, $A^{-1}\mathbf b-\tfrac1c\mathbf b=\tfrac1c A^{-1}(cI-A)\mathbf b=\tfrac1c A^{-1}P\mathbf b$.
\end{proof}

\begin{proof}[Proof of Theorem~\ref{thm:interference}]
\emph{(i)} By Eq.~\eqref{eq:pf_minimizers}, the two minimizers coincide precisely when $P\mathbf b=\mathbf 0$.
Since $P$ is a sum of positive semidefinite matrices,
this condition is equivalent to
\begin{equation}\label{eq:pf_equiv}
\mathbf b=(\mathbf u_i^\top\mathbf b)\,\mathbf u_i,
\qquad \text{for every }i\text{ with }\tau_i>0.
\end{equation}
If the right-hand side holds, then $|\mathbf u_i^\top\mathbf b|=\|\mathbf b\|>0$ and $\mathbf u_i=\pm\,\mathbf b/\|\mathbf b\|$ for every supported $i$, i.e., the supported directions are collinear.
Conversely, if $\mathbf u_i=s_i\mathbf u$ with $s_i\in\{\pm1\}$ for all supported $i$, then $\mathbf b\in\operatorname{span}(\mathbf u)$ and $\Pi_i=I-\mathbf u\mathbf u^\top$ for all supported $i$, so $\Pi_i\mathbf b=\mathbf 0$ and Eq.~\eqref{eq:pf_equiv} gives $\boldsymbol\delta^*_{\mathrm{pt}}=\boldsymbol\delta^*_{\mathrm{dir}}$.

\emph{(ii)} Let the supported directions be mutually orthogonal and fix a supported $j$.
In $A\mathbf u_j=\lambda\mathbf u_j+\sum_i\tau_i\mathbf u_i(\mathbf u_i^\top\mathbf u_j)$ every term with $i\neq j$ vanishes (either $\tau_i=0$ or $\mathbf u_i^\top\mathbf u_j=0$), so $A\mathbf u_j=(\lambda+\tau_j)\mathbf u_j$ and $A^{-1}\mathbf u_j=\mathbf u_j/(\lambda+\tau_j)$.
Applying $A^{-1}$ to $\mathbf b=\sum_{i:\tau_i>0}\tau_i\ell_i\mathbf u_i$ term by term and using the orthonormality of the supported directions,
\begin{equation}\label{eq:pf_orth}
\begin{aligned}
\mathbf u_j^\top\boldsymbol\delta^*_{\mathrm{dir}}
&=\sum_{i:\tau_i>0}\frac{\tau_i\ell_i\,\mathbf u_j^\top\mathbf u_i}{\lambda+\tau_i}=\frac{\tau_j\ell_j}{\lambda+\tau_j},\\
\mathbf u_j^\top\boldsymbol\delta^*_{\mathrm{pt}}
&=\frac{\mathbf u_j^\top\mathbf b}{c}=\frac{\tau_j\ell_j}{\lambda+\tau_j+\sum_{i\neq j}\tau_i},
\end{aligned}
\end{equation}
which is Eq.~\eqref{eq:thm_interf}.
\end{proof}

\begin{corollary}\label{cor:norm}
Under the assumptions of Theorem~\ref{thm:interference}, $\|\boldsymbol\delta^*_{\mathrm{dir}}\|\ge\|\boldsymbol\delta^*_{\mathrm{pt}}\|$, with equality iff the supported directions are collinear.
\end{corollary}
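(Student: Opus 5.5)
We compare the squared norms using the closed forms from Lemma~\ref{lem:structure}(d): $\boldsymbol\delta^*_{\mathrm{dir}}=A^{-1}\mathbf b$ and $\boldsymbol\delta^*_{\mathrm{pt}}=\mathbf b/c$, where $A+P=cI$. The difference of squared norms will be a quadratic form in $\mathbf b$ with a positive semidefinite matrix, and its kernel will then be identified with the collinear case already characterized in part (i) of the proof of Theorem~\ref{thm:interference}.

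\emph{Step 1.} $A$ is symmetric, and Lemma~\ref{lem:structure}(a) gives $\lambda I\preceq A\preceq cI$. Diagonalize $A=\sum_k\alpha_k\mathbf v_k\mathbf v_k^\top$ with $\lambda\le\alpha_k\le c$. Writing $\beta_k=\mathbf v_k^\top\mathbf b$, we get
\[
\|\boldsymbol\delta^*_{\mathrm{dir}}\|^2-\|\boldsymbol\delta^*_{\mathrm{pt}}\|^2=\sum_k\beta_k^2\Bigl(\frac1{\alpha_k^2}-\frac1{c^2}\Bigr)=\mathbf b^\top\bigl(A^{-2}-c^{-2}I\bigr)\mathbf b .
\]
Each coefficient is nonnegative because $\alpha_k\le c$, which proves the inequality.

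\emph{Step 2 (equality).} Equality holds iff $\beta_k=0$ for every $k$ with $\alpha_k<c$, that is, iff $\mathbf b$ lies in the eigenspace of $A$ for the eigenvalue $c$. Since $cI-A=P$, this means $P\mathbf b=\mathbf 0$. A more direct route is to note that $A^{-2}-c^{-2}I=c^{-2}A^{-2}(cI-A)(cI+A)$, where all factors are commuting polynomials in $A$. Since $cI+A\succ0$ and $A^{-2}\succ0$, the quadratic form vanishes iff $(cI-A)^{1/2}\mathbf b=\mathbf 0$, i.e.\ iff $P\mathbf b=\mathbf 0$. Part (i) of the proof of Theorem~\ref{thm:interference}, which relies on Lemma~\ref{lem:structure}(c), already shows that $P\mathbf b=\mathbf 0$ holds iff $\mathbf b=(\mathbf u_i^\top\mathbf b)\mathbf u_i$ for every supported $i$. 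Because $\mathbf b\neq\mathbf 0$, that condition is in turn equivalent to collinearity of the supported directions. So equality holds exactly in the collinear case.

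\emph{Main obstacle.} The only delicate point is justifying that the vanishing of a PSD quadratic form $\mathbf b^\top M\mathbf b$ forces $M\mathbf b=\mathbf 0$. We settle this by the eigen-expansion in Step 1, which makes the equality case transparent without manipulating square roots explicitly. We also need to check that restricting to $T_{\mathbf q}$ causes no issue. This holds because Lemma~\ref{lem:structure}(d) identifies the constrained minimizers with the unconstrained ones, so the norms compared are exactly those of $A^{-1}\mathbf b$ and $\mathbf b/c$.
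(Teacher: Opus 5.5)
Your proof is correct and is essentially the paper's argument: diagonalize $A$ with eigenvalues in $[\lambda,c]$, compare $\sum_k\alpha_k^{-2}(\mathbf v_k^\top\mathbf b)^2$ with $c^{-2}\|\mathbf b\|^2$ term by term, and identify equality with $A\mathbf b=c\mathbf b$, i.e.\ $P\mathbf b=\mathbf 0$, which is collinearity by Eq.~\eqref{eq:pf_equiv}. Your alternative route for the equality case, via $A^{-2}-c^{-2}I=c^{-2}A^{-2}(cI-A)(cI+A)$, is also valid, but it is not needed.
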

\begin{proof}
Let $A=\sum_{k}\alpha_k\mathbf v_k\mathbf v_k^\top$ with orthonormal $\mathbf v_k$ and $\alpha_k\in[\lambda,c]$ (Lemma~\ref{lem:structure}(a)).
Then
\begin{equation*}
\|\boldsymbol\delta^*_{\mathrm{dir}}\|^2=\|A^{-1}\mathbf b\|^2=\sum_k\alpha_k^{-2}(\mathbf v_k^\top\mathbf b)^2\ \ge\ c^{-2}\|\mathbf b\|^2=\|\boldsymbol\delta^*_{\mathrm{pt}}\|^2 ,
\end{equation*}
with equality iff $\mathbf v_k^\top\mathbf b=0$ whenever $\alpha_k<c$, i.e., iff $A\mathbf b=c\mathbf b$, i.e., iff $P\mathbf b=\mathbf 0$, which is collinearity by Eq.~\eqref{eq:pf_equiv}.
\end{proof}

\section{Efficient Solution to the DIFF Objective} \label{app:sol}
DIFF estimates the query correction by solving the objective in Eq.~\eqref{eq:obj}.
Let $U= [\mathbf u_1^\top, \mathbf u_2^\top, ...,  \mathbf u_K^\top]^\top \in\mathbb R^{K\times D}, T=\operatorname{diag}(\tau_1,\ldots,\tau_K), \boldsymbol\ell=[\ell_1,\ldots,\ell_K]^\top$, the objective can be written in matrix form as:
\begin{equation}
    \mathcal L(\boldsymbol\delta)=\frac{\lambda}{2}\|\boldsymbol\delta\|_2^2+\frac12 (U\boldsymbol\delta-\boldsymbol\ell)^\top T(U\boldsymbol\delta-\boldsymbol\ell).
\end{equation}
Taking the gradient with respect to $\boldsymbol\delta$ and setting the gradient to zero yields:
\begin{equation}
    \nabla_{\boldsymbol\delta}\mathcal L = \lambda\boldsymbol\delta + U^\top T (U\boldsymbol\delta-\boldsymbol\ell).
\end{equation}
\begin{equation}
    \lambda\boldsymbol\delta+U^\top TU\boldsymbol\delta=U^\top T\boldsymbol\ell.
\end{equation}

A direct solution is therefore $\boldsymbol\delta^*=(\lambda I_D+U^\top TU)^{-1}U^\top T\boldsymbol\ell$. 
However, it involves solving a $D\times D$ linear system. 
Since dense embedding dimensions can be large while the number of feedback items typically satisfies $K\ll D$, it is preferable to reformulate the solution in terms of a $K\times K$ system.
Let $W=T^{1/2} = \operatorname{diag} (\sqrt{\tau_1},\ldots,\sqrt{\tau_K})$ and $A=WU$, we have $U^\top TU=U^\top W^2U=A^\top A, U^\top T\boldsymbol\ell = A^\top W\boldsymbol\ell$, thus $\boldsymbol\delta^*=(\lambda I_D+A^\top A)^{-1}A^\top W\boldsymbol\ell$.
With $(\lambda I_D+A^\top A)^{-1}A^\top=A^\top(\lambda I_K+AA^\top)^{-1}$, we can obtain Eq.~\eqref{eq:opt_result}:
\begin{equation}
    \boldsymbol\delta^*=A^\top(\lambda I_K+AA^\top)^{-1}W\boldsymbol\ell=U^\top W\left(\lambda I_K+WUU^\top W\right)^{-1}W\boldsymbol\ell.
\end{equation}

\begin{table*}[htbp]
\centering
\caption{NDCG@10 results on each subset with BGE.}
\label{tab:fullresult}
\resizebox{0.9\textwidth}{!}{
\begin{tabular}{cccccccccccccccc}
\toprule
\multicolumn{1}{c|}{Subset} & AA & CF & DB & FE & FQ & HP & MS & NF & NQ & QU & SD & SF & TO & \multicolumn{1}{c|}{TR} & Avg. \\
\midrule
\multicolumn{1}{c|}{BGE} & 0.4534 & 0.3119 & 0.4078 & 0.8630 & 0.4064 & 0.7260 & 0.7245 & 0.3732 & 0.5411 & 0.8874 & 0.2168 & 0.7394 & 0.2571 & \multicolumn{1}{c|}{0.7802} & 0.5492 \\
\midrule
\multicolumn{16}{l}{\textit{Feedback from PRF}} \\
\midrule
\multicolumn{1}{c|}{VPRF} & 0.4552 & 0.3008 & 0.3831 & 0.7423 & 0.3584 & 0.5834 & 0.7268 & 0.3479 & 0.4976 & 0.7654 & 0.2164 & 0.6740 & 0.2132 & \multicolumn{1}{c|}{0.7983} & 0.5045 \\
\multicolumn{1}{c|}{Concat-P}  & 0.4549 & 0.3032 & 0.3852 & 0.7859 & 0.3791 & 0.6616 & 0.7275 & 0.3517 & 0.5294 & 0.8611 & 0.2160 & 0.7044 & 0.2417 & \multicolumn{1}{c|}{0.8096} & 0.5294 \\
\multicolumn{1}{c|}{Concat-L} & 0.4666 & 0.2757 & 0.3701 & 0.8376 & 0.3688 & 0.6879 & 0.7414 & 0.3445 & 0.5342 & 0.7747 & 0.2123 & 0.7178 & 0.2549 & \multicolumn{1}{c|}{0.7607} & 0.5248 \\
\multicolumn{1}{c|}{OR-PRF} & 0.4551 & 0.3077 & 0.3954 & 0.8153 & 0.3890 & 0.6824 & 0.7356 & 0.3664 & 0.5380 & 0.8650 & 0.2197 & 0.7188 & 0.2488 & \multicolumn{1}{c|}{0.8116} & 0.5392 \\
\multicolumn{1}{c|}{DIME} & 0.4509 & 0.3082 & 0.4005 & 0.8347 & 0.4018 & 0.6857 & 0.7244 & 0.3684 & 0.5287 & 0.8717 & 0.2186 & 0.7347 & 0.2323 & \multicolumn{1}{c|}{0.8049} & 0.5404 \\
\multicolumn{1}{c|}{Eclipse} & 0.4478 & 0.3084 & 0.3984 & 0.8423 & 0.4037 & 0.6919 & 0.7068 & 0.3679 & 0.5246 & 0.8764 & 0.2119 & 0.7314 & 0.2350 & \multicolumn{1}{c|}{0.7967} & 0.5388 \\
\multicolumn{1}{c|}{AnchorQE} & 0.4558 & 0.3171 & 0.4065 & 0.8477 & 0.4082 & 0.7015 & 0.7311 & 0.3697 & 0.5398 & 0.8780 & 0.2223 & 0.7382 & 0.2363 & \multicolumn{1}{c|}{0.7999} & 0.5466 \\
\multicolumn{1}{c|}{DIFF} & 0.4673 & 0.3140 & 0.4104 & 0.8541 & 0.4076 & 0.7212 & 0.7343 & 0.3772 & 0.5451 & 0.8834 & 0.2223 & 0.7455 & 0.2473 & \multicolumn{1}{c|}{0.8012} & 0.5522 \\ 
\midrule
\multicolumn{16}{l}{\textit{Feedback from HyDE}} \\
\midrule
\multicolumn{1}{c|}{VPRF} & 0.4298 & 0.2385 & 0.4320 & 0.8249 & 0.4073 & 0.7058 & 0.7396 & 0.3865 & 0.5937 & 0.8302 & 0.2135 & 0.7708 & 0.2350 & \multicolumn{1}{c|}{0.7664} & 0.5410 \\
\multicolumn{1}{c|}{Concat-P} & 0.4516 & 0.2538 & 0.4263 & 0.8102 & 0.4111 & 0.7133 & 0.7423 & 0.3844 & 0.5938 & 0.8645 & 0.2191 & 0.7734 & 0.2639 & \multicolumn{1}{c|}{0.7836} & 0.5494 \\
\multicolumn{1}{c|}{Concat-L} & 0.4488 & 0.2415 & 0.4084 & 0.8007 & 0.3909 & 0.6710 & 0.7189 & 0.3649 & 0.5746 & 0.8284 & 0.2097 & 0.7497 & 0.2594 & \multicolumn{1}{c|}{0.7562} & 0.5302  \\
\multicolumn{1}{c|}{OR-PRF} & 0.4527 & 0.2687 & 0.4382 & 0.8306 & 0.4248 & 0.7303 & 0.7567 & 0.3943 & 0.6014 & 0.8685 & 0.2224 & 0.7713 & 0.2679 & \multicolumn{1}{c|}{0.8203} & 0.5606 \\
\multicolumn{1}{c|}{DIME} & 0.4483 & 0.2990 & 0.4310 & 0.8641 & 0.4285 & 0.7366 & 0.7344 & 0.3899 & 0.5805 & 0.8829 & 0.2201 & 0.7652 & 0.2547 & \multicolumn{1}{c|}{0.8407} & 0.5626 \\
\multicolumn{1}{c|}{Eclipse} & 0.4472 & 0.2982 & 0.4335 & 0.8634 & 0.4327 & 0.7378 & 0.7372 & 0.3895 & 0.5819 & 0.8829 & 0.2198 & 0.7690 & 0.2534 & \multicolumn{1}{c|}{0.8508} & 0.5641  \\
\multicolumn{1}{c|}{AnchorQE} & 0.4592 & 0.3137 & 0.4424 & 0.8723 & 0.4326 & 0.7483 & 0.7536 & 0.3885 & 0.5872 & 0.8870 & 0.2259 & 0.7699 & 0.2580 & \multicolumn{1}{c|}{0.8401} & 0.5700 \\
\multicolumn{1}{c|}{DIFF} & 0.4550 & 0.3119 & 0.4429 & 0.8695 & 0.4342 & 0.7487 & 0.7592 & 0.3903 & 0.5962 & 0.8832 & 0.2277 & 0.7738 & 0.2646 & \multicolumn{1}{c|}{0.8480} & 0.5717 \\
\bottomrule
\end{tabular}
}
\end{table*}
\begin{table}[t]
\centering
\caption{Hyperparameter analysis of the prior strength $\lambda$ and maximum update angle $\theta_{\mathrm{up}}$ with BGE. Results are reported in NDCG@10.}
\label{tab:hyper}
\resizebox{\columnwidth}{!}{
\begin{tabular}{c|ccc|ccc}
\toprule
& \multicolumn{3}{c|}{PRF} & \multicolumn{3}{c}{HyDE} \\
\cmidrule(lr){2-4}\cmidrule(lr){5-7}
$\theta_{\mathrm{up}} \backslash \lambda$
& 0.5 & 1 & 2
& 0.5 & 1 & 2 \\
\midrule
$\pi/12$ & 0.5518 & \textbf{0.5522} & 0.5517
         & 0.5694 & 0.5667 & 0.5631 \\
$\pi/6$  & 0.5452 & 0.5491 & 0.5512
         & 0.5682 & 0.5711 & 0.5689 \\
$\pi/4$  & 0.5453 & 0.5462 & 0.5496
         & 0.5665 & 0.5695 & \textbf{0.5717} \\
\bottomrule
\end{tabular}
}
\end{table}
\begin{table}[t]
\centering
\caption{Efficiency comparison of different query reformulation methods on 10 queries.}
\label{tab:time}
\begin{tabular}{lcc}
\toprule
Method & Time (s) & Relative Time \\
\midrule
VPRF & 0.247 & 0.96$\times$ \\
Concat-P & 0.265 & 1.03$\times$ \\
DIME & 0.239 & 0.93$\times$ \\
AnchorQE & 0.946 & 3.67$\times$ \\
DIFF & 0.258 & 1.00$\times$ \\
\bottomrule
\end{tabular}
\end{table}

This form requires solving only a $K\times K$ system instead of a $D\times D$ one. Constructing $UU^\top$ costs $\mathcal{O}(K^2D)$, and solving the resulting linear system costs $\mathcal{O}(K^3)$, leading to an overall complexity of $\mathcal{O}(K^2D+K^3)$.
Since $K\ll D$ in typical retrieval settings, this formulation is substantially more efficient than solving directly in the embedding dimension.

\section{DIFF Algorithm} \label{app:algo}
Algorithm~\ref{algo} summarizes the overall DIFF workflow. Generally, it consists of 4 steps: calculating the directional evidence, calculating the hierarchical confidence, posterior inference (solving the optimization objective), and spherical update.

\section{Implementation Details} \label{app:implement}
For dense retrieval, we use a batch size of 256 and perform inference in FP16 precision.
For all query reformulation methods, hyperparameters are selected via grid search to ensure a fair comparison. 
Specifically for DIFF, we search the prior strength over $\lambda \in \{0.5, 1, 2\}$ and the maximum update angle over $\theta_{\mathrm{up}} \in \{\pi/12, \pi/6, \pi/4\}$. 
The vector indexing is implemented using the Faiss-GPU framework~\cite{douze2025faiss}.
All experiments are conducted on a single NVIDIA A800 GPU and averaged over 5 runs.

\section{Extensive Experimental Results} 
\subsection{NDCG@10 on Subsets} \label{app:subset_detail}
Table~\ref{tab:fullresult} reports the detailed NDCG@10 on all subsets. 

\subsection{NDCG@10 Changes ($\Delta$NDCG@10)} \label{app:change}
Figure~\ref{fig:query_delta} reports the overall $\Delta$NDCG@10 distribution, and Figure~\ref{fig:subset_delta} demonstrates the $\Delta$NDCG@10 for each subset.
Under PRF, VPRF degrades substantially more queries than it improves, reflecting the risk of directly incorporating noisy feedback. 
In contrast, DIFF leaves most queries unchanged and keeps the numbers of improved and degraded queries nearly balanced, with its $\Delta$NDCG@10 distribution tightly concentrated around zero. 
Under HyDE, where the feedback quality is generally higher, DIFF produces more improvements than degradations while still preserving a large fraction of queries. 
It suggests that DIFF behaves conservatively under noisy feedback, but allows more beneficial updates when the feedback is more reliable.

\subsection{Hyperparameter Analysis ($\lambda$ and $\theta_{\mathrm{up}}$)} \label{app:hyper}
We analyze the sensitivity of DIFF to the prior strength $\lambda$ and the maximum update angle $\theta_{\mathrm{up}}$, as shown in Table~\ref{tab:hyper}. 
The two feedback sources exhibit different preferences. 
Under PRF, a relatively small update bound ($\theta_{\mathrm{up}}=\pi/12$) performs consistently well across different $\lambda$, indicating that conservative updates are preferable when PRF may contain substantial noise. 
In contrast, HyDE benefits from allowing larger query movements, with competitive performance observed under $\theta_{\mathrm{up}}=\pi/6$ and $\pi/4$, consistent with its generally higher-quality feedback. 
Meanwhile, increasing $\lambda$ tends to mitigate the effect of a larger update bound by imposing a stronger prior toward the original query.
Overall, the interaction between $\lambda$ and $\theta_{\mathrm{up}}$ reflects their complementary roles in controlling reformulation: the former regularizes the inferred correction, while the latter explicitly bounds the final angular displacement.

\subsection{Efficiency Analysis} \label{app:efficiency}
We compare the efficiency of different methods by measuring their execution time on 10 queries.
As shown in Table~\ref{tab:time}, DIFF achieves comparable efficiency to lightweight baselines such as VPRF, Concat-P, and DIME.
In contrast, AnchorQE takes approximately 3.67 times that of DIFF, due to its additional retrieval operations for estimating feedback contributions.
These results demonstrate that DIFF effectively incorporates directional evidence without introducing substantial computational overhead.
\end{document}